%\documentclass[10pt,journal,twocolumn,twoside]{IEEEtran} %\documentclass[11pt,conference,twocolumn]{IEEEtran} %\documentclass{IEEEtran} %%%%%%%%%%%%

%\documentclass[journal]{IEEEtran}
%\documentclass[conference]{IEEEtran}
%\documentclass[conference,10pt]
%\documentclass[10pt, conference]

% single column
%\documentclass[12pt,letterpaper,draftcls,onecolumn,journal]{IEEEtran}

% double column
\documentclass[10pt,twocolumn]{IEEEtran} %to change to the final version with two columns

%%%%%%%%%%%%   just use this line instead of the previous one and delete the line %%%%%%%%% renewcommand{\baselinestretch}{1.8}
%%%%%%%%%%%%   that you find later on or substitute 1 to 1.5 % to specify a filename, can use:
%\documentclass[10pt,twocolumn]{./IEEEtran_v15}

%%%%%%%%%%%%%%%%%%%%%%%%%%%%%%%%%%%%%%%%%%%%%%%%%%%%%%%%%%%%%%%%%%%%%% %% %% What I added

%\usepackage{epsfig,graphics,amssymb} \usepackage{amscd} \usepackage{enumerate}
\usepackage{theorem} \usepackage{cite} \usepackage{stfloats} \usepackage{epsfig} \usepackage{verbatim}
\usepackage{graphicx} \usepackage{amssymb} \usepackage{amsmath} \usepackage{color}
\usepackage{bm}
\usepackage{xcolor}
\usepackage{algorithm}
\usepackage{algorithmic}
%\usepackage[noend]{algpseudocode}

%\theoremstyle{plain} \theorembodyfont{\itshape} \theoremheaderfont{\scshape} \newtheorem{proposition}{Proposition}
%\theorembodyfont{\itshape} \theoremheaderfont{\scshape} \newtheorem{lemma}{Lemma} \newtheorem{app_lemma}{Lemma} \theoremstyle{plain}
%\theorembodyfont{\itshape} \theoremheaderfont{\scshape} \newtheorem{corollary}{Corollary} \newtheorem{remark}{Remark} \newtheorem{definition}{Definition}

\theoremstyle{plain}
\newtheorem{theorem}{Theorem}
\theoremstyle{plain}
\newtheorem{proposition}{Proposition}
\theoremstyle{plain}

\theoremstyle{plain}
\newtheorem{lemma}{Lemma}

%
%\begin{IEEEkeywords}
%Distributed Antenna Systems (DAS), Random Matrices, Euclidean Random Matrices (ERM)
%\end{IEEEkeywords}

%%%%%%%%%%

%%%%%%%%%%%%%%%%
%----------------------------------------------------------------
% shortcuts

 %vector

\newcommand\ve{{\bf e}}

\newcommand\vh{{\bf h}}

\newcommand\vq{{\bf q}}
\newcommand\vr{{\bf r}}
\newcommand\vs{{\bf s}}

\newcommand\vu{{\bf u}}

\newcommand\vy{{\bf y}}

\newcommand\mA{{\bf A}} %matrix

\newcommand\mE{{\bf E}}

\newcommand\mH{{\bf H}}
\newcommand\mI{{\bf I}}

\newcommand\mR{{\bf R}}
\newcommand\mS{{\bf S}}
\newcommand\mT{{\bf T}}
\newcommand\mU{{\bf U}}
\newcommand\mV{{\bf V}}
\newcommand\mW{{\bf W}}
\newcommand\mX{{\bf X}}

\newcommand\mZ{{\bf Z}}
 %set

%\newcommand\E{\mathrm{E}}
 % expection

\begin{document}

\def\QEDclosed{\mbox{\rule[0pt]{1.3ex}{1.3ex}}}

\def\QEDopen{{\setlength{\fboxsep}{0pt}\setlength{\fboxrule}{0.2pt}\fbox{\rule[0pt]{0pt}{1.3ex}\rule[0pt]{1.3ex}{0pt}}}}
\def\QED{\QEDopen}

\def\proof{}
\def\endproof{\hspace*{\fill}~\QED\par\endtrivlist\unskip}

\markboth{IEEE Transactions on Signal Processing, VOL. 64, No. 11, 2016}{YIN \MakeLowercase{\textit{et al.}}: Robust Pilot Decontamination Based on Joint Angle and Power Domain Discrimination}

%Robust pilot decontamination based on joint angle and power domain discrimination
\title{Robust Pilot Decontamination Based on Joint Angle and Power Domain Discrimination}
\author{Haifan~Yin,
    Laura~Cottatellucci,
    David~Gesbert~\IEEEmembership{Fellow,~IEEE},
    Ralf~R.~M{\"u}ller~\IEEEmembership{Senior Member,~IEEE},
    and~Gaoning~He%
\thanks{%Manuscript received February 1, 2012; revised June 15, 2012.
Copyright (c) 2015 IEEE. Personal use of this material is permitted. However, permission to use this material for any other purposes must be obtained from the IEEE by sending a request to pubs-permissions@ieee.org.

This work was supported by Huawei France Research Center.  Ralf M{\"u}ller acknowledges the support of the Alexander von Humboldt-Foundation. A part of this work \cite{yin:15a} was presented on June 30, 2015 at the 16th IEEE International Workshop on Signal Processing Advances in Wireless Communications, SPAWC 2015, in Stockholm, Sweden.}%
\thanks{H. Yin, L. Cottatellucci, and  D. Gesbert are with EURECOM, 06410 Biot, France (e-mail: yin, cottatel, gesbert@eurecom.fr, ).}%
\thanks{R. R. M{\"u}ller is with Friedrich-Alexander Universit{\"a}t Erlangen-N{\"u}rnberg, Cauerstr. 7/LIT, 91058 Erlangen, Germany (e-mail: ralf.r.mueller@fau.de).}%
\thanks{G. He is with Huawei France Research Center, Huawei Technologies Co. Ltd., 92100 Boulogne-Billancourt, France (e-mail: hegaoning@huawei.com).}%
\thanks{Digital Object Identifier 10.1109/TSP.2016.2535204.}%
}

\maketitle

\begin{abstract}
We address the problem of noise and interference corrupted channel estimation in massive MIMO systems. Interference, which originates from pilot reuse (or contamination), can in principle be discriminated on the basis of the distributions of path angles and amplitudes. In this paper we propose novel robust channel estimation algorithms exploiting path diversity in both angle and power domains, relying on a suitable combination of the spatial filtering and amplitude based projection. The proposed approaches are able to cope with a wide range of system and topology scenarios, including those where, unlike in previous works, interference channel may overlap with desired channels in terms of multipath angles of arrival or exceed them in terms of received power. In particular we establish analytically the conditions under which the proposed channel estimator is fully decontaminated. Simulation results confirm the overall system gains when using the new methods.

\end{abstract}

\begin{IEEEkeywords}
massive MIMO, pilot contamination, pilot decontamination, channel estimation, covariance, subspace, eigenvalue decomposition
\end{IEEEkeywords}

%%%%%%%%%%%%%%%%%%%%%%%%%%%%%%%%%%%%%%%

%%%%%%%%%%%%%%%%%%%%%%%%%%%%%%%%%%%%%%% %\IEEEPARstart{S}{ince} %\newpage

\section{Introduction}\label{sec_intro}

Massive MIMO (also known as Large-Scale Antenna Systems) introduced in \cite{marzetta:10a}, is widely believed to be one of the key enablers of the future 5th generation (5G) wireless systems thanks to its potential to substantially enhance spectral and energy efficiencies \cite{marzetta:10a,ngo:13} compared to traditional MIMO with fewer antennas. This technique is based on the law of large numbers, which predicts that, as the number of base station antennas increases, the vector channel for a desired user terminal will grow more orthogonal to the vector channel of an interfering user, thus allowing the base station to reject interference by precoding, or even, as a low-complexity approach, simply aligning the beamforming vector with the desired channel (``Maximum Ratio Combining", or MRC), providing that Channel State Information (CSI) is known at base station. In practice however, CSI is acquired based on training sequences sent by user terminals. Due to limited time and frequency resources, non-orthogonal pilot sequences are typically used by user terminals in neighboring cells, resulting in residual channel estimation error. This effect, called \emph{pilot contamination} \cite{jose2009ISIT}, \cite{jose2011TWC}, has a detrimental impact on the actual achievable spectral and energy efficiencies in real systems.
%As pointed out in \cite{marzetta:10a,ngo:13}, this pilot contamination problem has been identified as one of the limiting factors of the ultimate performance of massive MIMO.
As a result, considerable research efforts have been spent in the last couple of years towards alleviating pilot interference in massive MIMO networks.

Such techniques %to avoid and/or mitigate pilot contamination
span from smart design of pilot reuse schemes (e.g. \cite{sorensen:14,atzeni:14}) to channel estimation techniques based on coordinated pilot allocation (e.g. \cite{yin:13,neumann:14}), to methods relying on multi-cell joint processing (e.g. \cite{ashikhmin:14}), to nonlinear channel estimation techniques leveraging on some fundamental features of massive MIMO systems (e.g. \cite{ngo:12,mueller:14,yin:13,hu2015}).

Two key features of massive MIMO channels that have been previously reported are of particular interest here: 1) channels of different users tend to be pairwise orthogonal when the number of antennas increases, thus leading to a specific subspace structure for the received data vectors that depend on these channels \cite{mueller:14} and 2) the channel covariance matrix exhibits a low-rankness property whenever the multipath impinging on the MIMO array spans a finite angular spread  \cite{yin:13,adhikary:13,yin:14}.
The blind signal subspace estimation in \cite{mueller:14} capitalizes on the first property. The second property has been utilized in \cite{yin:13,adhikary:13,yin:14,yin:14a,adhikary:13a}, assuming the knowledge of the long-term channel covariance matrices.
While the exploitation of the two properties {\em individually} has given rise to a set of distinct original decontamination approaches, in this work we will exploit these two key features in a {\em combined} manner. Doing so we can propose a novel approach towards mitigating pilot contamination that exhibits much higher levels of robustness.

More specifically, in \cite{cottatellucci:13,mueller:14},
the pairwise channel orthogonality property allows to blindly estimate the user-of-interest channel subspace and discriminate between user-of-interest signals and interference based on the channel powers. In practice, decontamination occurs via a
projection driven by the channel amplitudes. This approach works well within the constraint that the interference channel is received with a power level sufficiently lower than that of the desired channel, a condition hard to guarantee for some edge-of-cell users.

In a way completely different from \cite{cottatellucci:13,mueller:14}, another approach based on a linear minimum mean squared error (MMSE) estimator is adopted in \cite{yin:13} to estimate the channel of interest via projection of the received signals onto the user-of-interest subspace. This subspace, identified by a channel covariance matrix (a long-term one, as opposed to the instantaneous signal correlation matrix of \cite{cottatellucci:13,mueller:14}), is related to the angular spread of the signal of interest \cite{yin:13} and enables to annihilate the interference from users with non-overlapping domains of multipath angles-of-arrival (AoA). Interestingly, this latter approach makes no assumption on received signal amplitudes and can also discriminate against interfering users that are received with similar or even higher powers. Yet, the approach fails to decontaminate pilots when propagation scattering creates large angle spread, causing spatial overlapping among desired and interference channels.

In this paper, we point out that the strengths of these two previously unrelated estimation methods are strongly complementary, offering a unique opportunity for developing robust channel estimation schemes.
Thus, we aim to properly merge the two projections in complementary domains while keeping the individual benefits.
In fact, we propose a family of algorithms striking various performance/complexity trade-offs.

We start by presenting a first scheme named ``covariance-aided amplitude based projection" that effectively combines projections in the angular and amplitude domains and exhibits robustness to interference power/angles overlapping conditions.
We present an asymptotic analysis which reveals the conditions under which the channel estimation error due to pilot contamination
and noise can be made to vanish. An intuitive physical interpretation of this condition for a Uniform Linear Array (ULA) is given in the form of the residual interference channel energy contained in the multipath components that overlap in angle with those of the desired channel. Although the physical explanation is given for the ULA example, the general principle apply to other antenna placement topologies.

The obtained condition  for decontamination is in general less restrictive than the condition required by previous MMSE and the amplitude projection-based methods taken separately to achieve complete removal of pilot contamination.

We then propose two low-complexity alternative schemes called ``subspace and amplitude based projection" and ``MMSE + amplitude based projection" respectively.
Such schemes achieve different complexity-performance trade-off at moderate number of antennas. Specifically, the ``subspace and amplitude based projection" can be shown to reach asymptotic (in the number of antennas) decontamination result under the same channel topology conditions as the first scheme.

More specifically, our contributions are as follows:
\begin{itemize}
  \item We put forward a modification of the known method of amplitude based projection, with increased robustness.
  \item We propose a spatial filter which helps bring down the power of interference while preserving the signal of interest. With this spatial filter, we present a novel channel estimation scheme called ``covariance-aided amplitude based projection". It combines the merits of linear MMSE estimator and amplitude based projection method, yet can be shown to have significant gains over these known schemes.
  \item We analyze the asymptotic performance of this proposed method and provide weaker condition compared to the previous methods where the estimation error of the proposed method goes to zero asymptotically in the limit of large number of antennas and data symbols. The asymptotic analysis relies on mild technical conditions such as uniformly boundedness of the spectral norm of channel covariance. %We also believe this condition is only a sufficient condition and can be weakened.
  \item As the uniformly boundedness of the largest eigenvalue of channel covariance was reported to be useful in previous works (such as \cite{wagner2012}) but not formally analyzed, we identify in the case of ULA a sufficient propagation condition under which the uniformly bounded spectral norm of channel covariance is satisfied exactly.
  \item Finally we propose two low-complexity alternatives of the first method.
      An asymptotic performance characterization is also given.

\end{itemize}

The paper is organized as follows: In section \ref{modeling} we introduce the system model. Section \ref{sec:mmse} is a brief review of MMSE channel estimator and its asymptotic performance. In section \ref{sec:blind} we briefly recall the amplitude based projection of \cite{cottatellucci:13,mueller:14}, and we propose a first improvement of the method. Then we present the novel covariance-aided amplitude based projection in section \ref{subsec:singleUser} for the setting of single user per cell, and the asymptotic performance analysis of this method is shown in section \ref{sec:PerformanceAnalysisCA}. Section \ref{subsec:multiuser} presents a generalization of the proposed scheme to multi-user per cell scenario. In section \ref{sec:lowComplexityAlter} we propose two low-complexity alternatives of our previous method and similar asymptotic results on the system performance are given. Section \ref{sec:numericalResult} shows numerical results. Finally section \ref{conclusion} concludes the paper.

The notations adopted in the paper are as follows. We use boldface to denote matrices and vectors. Specifically, ${\mathbf{I}}_M$ denotes the $M \times M$ identity matrix. ${({\mathbf{X}})^T}$, ${({\mathbf{X}})^*}$, and ${({\mathbf{X}})^H}$ denote the transpose, conjugate, and conjugate transpose of a matrix ${\mathbf{X}}$ respectively. ${({\mathbf{X}})^\dag}$ is the Moore-Penrose pseudoinverse of ${\mathbf{X}}$. $\operatorname{tr}\left\{ \cdot \right\}$ denotes the trace of a square matrix. ${\left\| \cdot \right\|_2}$ denotes the $\ell^2$ norm of a vector when the argument is a vector, and the spectral norm when the argument is a matrix. In particular, if $\mA$ is a Hermitian matrix, ${\left\| \mA \right\|_2}$ is the largest eigenvalue of $\mA$. We index the eigenvalues of $\mA$ in non-increasing order and denote the $i$-th eigenvalue of $\mA$ by $\lambda_i\{ \mA \}$ and its corresponding eigenvector by $\ve_i\{\mA\}$.
${\left\| \cdot \right\|_F}$ stands for the Frobenius norm.
$\mathbb{E}\left\{ \cdot \right\}$ denotes the expectation.
The Kronecker product of two matrices ${\bf{X}}$ and ${\bf{Y}}$ is denoted by ${\bf{X}}\otimes{\bf{Y}}$. $\operatorname{vec} (\mX)$ is the vectorization of the matrix $\mX$.
${\mathop{\rm diag}\nolimits} \{ {\bf{a_1,...,a_N}}\}$ denotes a diagonal matrix or a block diagonal matrix with $\bf{a_1,...,a_N}$ at the main diagonal. $\triangleq$ is used for definition.

\section{Signal and Channel Models}\label{modeling}
We consider a network of $L$ time-synchronized\footnote{Note that assuming synchronization between uplink pilots provides a worst case scenario from a pilot contamination point of view, since any lack of synchronization will tend to statistically decorrelate the pilots. Furthermore, the main methods that we propose in this paper, i.e., the covariance-aided amplitude based projection and the subspace and amplitude based projection do not rely on accurate time synchronization.} cells, with full spectrum reuse. Each base station (BS) is equipped with $M$ antennas. %, which form a uniform linear array (ULA).
There are $K$ single-antenna users in each cell simultaneously served by their base station. The cellular network operates in time-division duplexing (TDD) mode, and due to channel reciprocity, the downlink channel is obtained at the BS by uplink training signal and data signal. Each base station estimates the channels of its $K$ users during a coherence time interval. The pilot sequences inside each cell are assumed orthogonal to each other in order to avoid intra-cell interference. However the same pilot pool is reused in other cells, giving rise to pilot contamination problem. The pilot sequence assigned to the $k$-th user in a certain cell is denoted by
 \begin{equation}\label{Eq:pilots}
    {{\mathbf{s}}_k} = {[\begin{array}{*{20}{c}}
  {{s_{k1}}}&{{s_{k2}}}& \cdots &{{s_{k\tau }}}
\end{array}]^T},
 \end{equation}
where $\tau$ is the length of a pilot sequence. Without loss of generality we assume unitary average power of pilot symbols:
\begin{equation*}
{\mathbf{s}}^T_{k_1} {\mathbf{s}}^*_{k_2} =
\left\{\begin{matrix}
0, &{k_1} \neq {k_2}\\
\tau , &{k_1} = {k_2}
\end{matrix}\right..
\end{equation*}

The $M \times 1$ channel vector between the $k$-th user located in the $l$-th cell and the $j$-th base station is denoted by $\vh_{lk}^{(j)}$.
The following classical multipath channel model \cite{molisch:10} is adopted:
\begin{equation}\label{Eq:simpleChanModel}
%{{\mathbf{h}}_{lk}} = \sqrt{ \frac{\beta_{lk}}{ P }} \sum\limits_{p = 1}^P {{\mathbf{a}}({\theta _{lkp}}){e^{j\varphi_{lkp}}}},
{{\mathbf{h}}_{lk}^{(j)}} = \frac{\beta_{lk}^{(j)}}{\sqrt{P}} \sum\limits_{p = 1}^P {{\mathbf{a}}({\theta _{lkp}^{(j)}}){e^{i\varphi_{lkp}^{(j)}}}},
\end{equation}
where $P$ is an arbitrary large number of i.i.d. paths, and $e^{i\varphi_{lkp}^{(j)}}$ is their i.i.d. random phase, which is independent over channel indices $l$, $k$, $j$, and path index $p$. ${\mathbf{a}}({\theta})$ is the steering (or phase response) vector by the array to a path originating from the angle of arrival $\theta$:
\begin{equation}\label{Eq:steeVec}
{\mathbf{a}}({\theta }) \triangleq \left[ {\begin{array}{*{20}{c}}
  1 \\
  {{e^{ - j2\pi \frac{D}{\lambda }\cos ({\theta})}}} \\
   \vdots  \\
  {{e^{ - j2\pi \frac{{(M - 1)D}}{\lambda }\cos ({\theta})}}}
\end{array}} \right],
\end{equation}
where $\lambda$ is the signal wavelength and $D$ is the antenna spacing which is assumed fixed. Note that we can limit $\theta$ to $\theta \in [0,\pi]$ because any $\theta \in [-\pi,0)$ can be replaced by $-\theta$ giving the same steering vector.
${\beta_{lk}^{(j)}}$ is the path-loss coefficient
\begin{equation}\label{Eq:pathloss}
{\beta_{lk}^{(j)}} = \sqrt{\frac{\alpha }{{{d_{lk}^{(j)}}^\gamma }}},
\end{equation}
in which $\gamma$ is the path-loss exponent, $d_{lk}^{(j)}$ is the geographical distance between the user and the $j$-th base station, and $\alpha$ is a constant. Note
that the model is shown for a ULA example for ease of exposition. Under this model, the covariance matrix can be shown asymptotically to have low rank, as long as the AoA support is bounded and strictly smaller than $[0, \pi]$. However, several other channel models also exhibit similar low-rank property \cite{yin:14}, which is the essential characteristic exploited by the MMSE estimator. Hence our approach is not dependent on the use of the one ring model above described.
In fact, our main results, namely Theorem \ref{theoStatis}, as well as the general principle carry to other channel models and antenna placement topologies.

We define
\begin{equation}
{\mathbf{H}}_{l}^{(j)} \triangleq
\begin{bmatrix}
\vh_{l1}^{(j)} & \vh_{l2}^{(j)} & \cdots & \vh_{lK}^{(j)}
\end{bmatrix},
\end{equation}
and the pilot matrix
\begin{equation}
\mS \triangleq \begin{bmatrix}
\vs_1  & \vs_2  & \cdots  & \vs_K
\end{bmatrix}^T.
\end{equation}

During the training phase, the received signal at the base station $j$ is
\begin{equation}\label{Eq:train}
{{\mathbf{Y}^{(j)}}} = \sum\limits_{l = 1}^L {{{\mathbf{H}}_{l}^{(j)}}{\mathbf{S}}+ {{\mathbf{N}^{(j)}}}},
\end{equation}
where ${{\mathbf{N}^{(j)}}} \in {\mathbb{C}^{M \times \tau }}$ is the spatially and temporally white additive Gaussian noise (AWGN) with zero-mean and element-wise variance ${\sigma _n^2}$.
Then, during the uplink data transmission phase, each user transmits $C$ data symbols. The received data signal at base station $j$ is given by:
\begin{equation}\label{Eq:ULdata}
\mW^{(j)} = \sum\limits_{l = 1}^L {{{\mathbf{H}}_{l}^{(j)}} {\mathbf{X}_l}+ {\mZ}^{(j)}},
\end{equation}
where $\mX_l \in {\mathbb{C}^{K \times C }}$ is the matrix of transmitted symbols of all users in the $l$-th cell. The symbols are i.i.d. with zero-mean and unit average element-wise variance. $\mZ^{(j)} \in {\mathbb{C}^{M \times C }}$ is the AWGN noise with zero-mean and element-wise variance ${\sigma _n^2}$. Note that the block fading channel is constant during the transmission for the $\tau$ pilot symbols and the $C$ data symbols.

\section{MMSE channel estimation}\label{sec:mmse}
We briefly recall the MMSE channel estimator in a multi-cell setting with single-user per cell. Without loss of generality, we assume cell $j$ is the target cell, and ${\vh}^{(j)}_j \in {\mathbb{C}^{M \times 1 }}$ is the desired channel, while ${\vh}^{(j)}_l \in {\mathbb{C}^{M \times 1 }}, \forall l \neq j$ are the interference channels.
We rewrite (\ref{Eq:train}) in a vectorized form,
\begin{equation}\label{Eq:vecMod}
{\mathbf{y}^{(j)}} = {\mathbf{\overline S}}\sum\limits_{l = 1}^L{{{\vh}^{(j)}_l}} + {\mathbf{n}^{(j)}},
\end{equation}
where ${\mathbf{y}^{(j)}} = \operatorname{vec} ({{\mathbf{Y}^{(j)}}})$, ${{\mathbf{n}^{(j)}}} = \operatorname{vec} ({{\mathbf{N}^{(j)}}})$.
A pilot sequence $\vs$ is shared by all users. The pilot matrix ${\mathbf{\overline S}}$ is given by
\begin{equation}\label{Eq:pilotMat}
  {\mathbf{\overline S}} \triangleq {{{\mathbf{s}}} \otimes {{\mathbf{I}}_M}}.
\end{equation}
We define the channel covariance matrices
\begin{align}
{{\mathbf{R}}^{(j)}_{l}} \triangleq \mathbb{E}\{ {{{{\mathbf{h}}}^{(j)}_{l}}{{\mathbf{h}}}_{l}^{{(j)}H}} \} \in {\mathbb{C}^{M \times M }}, l = 1, \ldots, L,
\end{align}
where the expectation is taken over channel realizations.

A linear MMSE estimator for $\vh^{(j)}_j$ is given by
\begin{align}\label{Eq:MMSE_2}
\widehat{{{\vh}}}_j^{(j){\text{MMSE}}} = {\mR}^{(j)}_j  \left( \tau \sum_{l=1}^{L} {\mR}^{(j)}_l + \sigma_n^2 \mI_{M} \right)^{-1} \overline{\mS}^H \vy^{(j)}.
\end{align}
As shown in previous works \cite{yin:13, yin:14}, for a base station equipped with a ULA, the above MMSE estimator can fully eliminate the effects of interfering channels when $M\rightarrow \infty$, under a specific ``non-overlap" condition on the distributions of multipath AoAs for the desired and interference channels. This condition is formalized as follows. Assume the user in cell $j$ is our target (desired) user.
%under the condition that the multipath AoAs of interference and desired channels have disjoint angular supports. More precisely, taking user 1 in cell $j$ for example,
Denote the angular support of the desired channel as $\Phi_d$, (i.e., the probability density function (PDF) $p_d(\theta)$ of the AoA of the desired channel $\vh_{j}^{(j)}$ satisfies $p_d(\theta)>0$ if $\theta \in \Phi_d$ and $p_d(\theta)=0$ if $\theta \notin \Phi_d$) and similarly the union of the angular supports of all interference channels $\vh_{l}^{(j)} (l \neq j)$ as $\Phi_i$. If $\Phi_d \cap \Phi_i = \emptyset$, then, as $M\rightarrow \infty$, (\ref{Eq:MMSE_2}) converges to an interference-free estimate. In practice the ``non-overlap" condition is hard to guarantee and the finite-$M$ performance of the MMSE scheme depends on angular spread and user location, although the latter can be shaped via the use of so-called coordinated pilot assignment (CPA) \cite{yin:13}.

\section{Amplitude based projection}\label{sec:blind}
Interestingly, angle is not the only domain where interference can be discriminated upon, as revealed from a completely different approach to pilot decontamination  \cite{cottatellucci:13,mueller:14}. In that approach the empirical instantaneous covariance matrix built from the received data (\ref{Eq:ULdata}) is exploited, in contrast to the use of long-term covariance matrices in (\ref{Eq:MMSE_2}). Assume cell $j$ is our target cell and each cell has $K$ users.
%We now briefly review the method in \cite{cottatellucci:13,mueller:14}.
The eigenvalue decomposition (EVD) of $\mW^{(j)} \mW^{(j)H}/C$ is written as
\begin{equation}\label{Eq:EVD_W}
\frac{1}{C} \mW^{(j)} \mW^{(j)H} = \mU^{(j)} \mathbf{\Lambda}^{(j)} \mU^{(j)H},
\end{equation}
where $\mU^{(j)} \in \mathbb{C}^{M \times M} = \begin{bmatrix}
\vu^{(j)}_{1} & \vu^{(j)}_{2} & \cdots & \vu^{(j)}_{M}
\end{bmatrix}$ is a unitary matrix and $\mathbf{\Lambda}^{(j)} = {\mathop{\rm diag}\nolimits} \{\lambda^{(j)}_1, \cdots, \lambda^{(j)}_M\}$ with its diagonal entries sorted in a non-increasing order. By extracting the first $K$ columns of $\mU^{(j)}$, i.e., the eigenvectors corresponding to the strongest $K$ eigenvalues, we obtain an orthogonal basis
\begin{equation}\label{Eq:basis}
\mE^{(j)} \triangleq \begin{bmatrix} \vu^{(j)}_{1} & \vu^{(j)}_{2} & \cdots & \vu^{(j)}_{K}\end{bmatrix} \in \mathbb{C}^{M \times K}.
\end{equation}
The basic idea in \cite{mueller:14, cottatellucci:13} is
to use the orthogonal basis $\mE^{(j)}$ as an estimate for the span of ${\mH}^{(j)}_j$, which includes all desired user channels in cell $j$. Then, by projecting the received signal onto the subspace spanned by $\mE^{(j)}$, most of the signal of interest is preserved. In contrast, the interference signal is canceled out thanks to the asymptotic property that the user channels are pairwise orthogonal as the number of antennas tends to infinity.
Thus after the above mentioned projection, the estimate of the multi-user channel ${\mH}^{(j)}_j$ is given by:
\begin{align}\label{Eq:pureBlind}
\widehat{{\mH}}_j^{(j) \text{AM}} &= \frac{1}{\tau} \mE^{(j)} \mathbf{E}^{(j)H} \mathbf{Y}^{(j)} \mS^H.
\end{align}
%where ``AM" means amplitude.
Note here that interference and desired channel directions are discriminated on the basis of channel amplitudes and not AoA, hence the estimate is labeled ``AM" for ``Amplitude". As a way to guarantee an asymptotic separation between the signal of interest and the interference in terms of power,
it has been suggested to introduce power control in the network \cite{mueller:14, cottatellucci:13}.
{\remark{Generalized amplitude projection}\label{remark1}}

As shown in \cite{mueller:14, cottatellucci:13}, the above method works well when the desired channels and interference channels are separable in power domain, i.e., the instantaneous powers of any desired channels are higher than that of any interference channels. In practice however, this assumption is not always guaranteed. For a finite number of antennas, the short-term fading realization can cause the interference subspace to spill over the desired one. An enhanced version can somewhat mitigate this problem by considering a generalized amplitude based projection. This consists in selecting a possibly larger number ($\kappa^{(j)}$) of dominant eigenvectors to form $\mE^{(j)}$, where $\kappa^{(j)}$ is the number of eigenvalues in $\mathbf{\Lambda}^{(j)}$ that are greater than $\mu \lambda^{(j)}_K$. $\mu$ is a design parameter that satisfies $0 \leq \mu < 1$.
See section \ref{sec:numericalResult} for details on the choice of $\mu$.

\section{Covariance-aided amplitude based projection}\label{sec:cova-aidedBlind}
Note that both previous methods, while being able to tackle pilot contamination in quite different ways, perform well only in some restricted user/channel topologies.
For a ULA base station, the MMSE method leads to interference-free channel estimates under the strict requirement that the desired and interference channel do not overlap in their AoA regions.
While the amplitude based projection requires that
no interference channel power exceeds that of a desired channel to achieve a similar result. Unfortunately, due to the random user location and scattering effects, it is quite unlikely to achieve these conditions at all times.
As a result, by combining the useful properties of both the MMSE and the amplitude projection method, we propose below novel estimation methods that will lead to enhanced robustness in a realistic cellular scenario.

\subsection{Single user per cell}\label{subsec:singleUser}
For ease of exposition we first consider a simplified scenario where intra-cell interference is ignored by assuming that each cell has only one user, i.e., $K=1$. The users in different cells share the same pilot sequence $\vs$. Then, with proper modifications, we will generalize this method to the setting of multiple users per cell in section \ref{subsec:multiuser}.

The objective is to combine long-term statistics which include spatial distribution information together with short-term empirical covariance which contains instantaneous amplitude and direction channel information. Hence, a spatial distribution filter can be associated to an instantaneous projection operator to help discriminate against any interference terms whose spatial directions live in a subspace orthogonal to that of the desired channel. The intuition is that such a spatial filter may bring the residual interference to a level that is acceptable to the instantaneous projection-based channel estimator.

In order to carry out the above intuition, we introduce a long-term statistical filter ${{\bf{\Xi}}_{j}}$, which is based on channel covariance matrices in a way similar to that used by the MMSE filter in (\ref{Eq:MMSE_2}).
\begin{equation}\label{Eq:Xij1}
{{\bf{\Xi}}_{j}} = {\left( {\sum\limits_{l = 1}^L {{{\bf{R}}_{l}^{(j)}}}  + \sigma _n^2{{\bf{I}}_M}} \right)^{ - 1}} {{\bf{R}}_{j}^{(j)}}.
\end{equation}

Note that the linear filter ${{\bf{\Xi}}_{j}}$ allows to discriminate against the interference in angular domain by projecting away from multipath AoAs that are occupied by interference. Note also that the choice of spatial filter ${{\bf{\Xi}}_{j}}$ is justified from the fact that the full information of desired channel ${{\bf{h}}_{j}^{(j)}}$ is preserved, as ${{\bf{h}}_{j}^{(j)}}$ lies in the signal space of ${{\bf{R}}_{j}^{(j)}}$. In fact, the desired channel is recoverable using another linear transformation ${{\bf{\Xi}}_{j}}'$:
\begin{equation}\label{Eq:Xi'}
{{\bf{\Xi}}_{j}}' \triangleq {{\bf{R}}_{j}^{(j)^\dag}} {\left( {\sum\limits_{l = 1}^L {{{\bf{R}}_{l}^{(j)}}}  + \sigma _n^2{{\bf{I}}_M}} \right)},
\end{equation}
as can be seen from the following equality
\begin{equation}
{\bf{\Xi}}_{j}' {\bf{\Xi}}_{j}  {\bf{{{h}}}}_{j}^{(j)} = {{\bf{R}}_{j}^{(j)\dag}} {{\bf{R}}_{j}^{(j)}} {\bf{{{h}}}}_{j}^{(j)} =  \mV_{j} \mV_{j}^H {\bf{{{h}}}}_{j}^{(j)} = {\bf{{{h}}}}_{j}^{(j)},
\end{equation}
where the columns of $\mV_{j}$ are the eigenvectors of ${\mathbf{R}}^{(j)}_{j}$ corresponding to non-zero eigenvalues. %, e.g., ${\mathbf{R}}^{(j)}_{j1} = \mV_{j1} \bm{\Sigma}_{j1} \mV_{j1}^H$.

The spatial filter is applied to the received data signal at base station $j$ as
\begin{equation}\label{Eq:Wtilde}
{\bf{\widetilde W}}_j \triangleq {{\bf{\Xi}}_{j}}{\bf{W}}^{(j)}.
\end{equation}
The amplitude-based method as shown in section \ref{sec:blind} can now be applied on the filtered received data to get rid of the residual interference.
Take the eigenvector corresponding to the largest eigenvalue of the matrix ${\bf{\widetilde W}}_j{{\bf{\widetilde W}}_j^{H}}/C$:
\begin{equation}\label{Eq:tilde_uj1}
\widetilde{\bf{u}}_{j1} = \ve_1\{ \frac{1}{C} {\bf{\widetilde W}}_j{{\bf{\widetilde W}}_j^{H}} \}.
\end{equation}
%$\widetilde{\bf{u}}_{j1} = \Lambda_1\{ \frac{1}{C} {\bf{\widetilde W}}_j{{\bf{\widetilde W}}_j^{H}} \}$ as $\widetilde{\bf{u}}_{j1}$.
Hence $\widetilde{\bf{u}}_{j1}$ can be considered as an estimate of the direction of the vector ${{\bf{\Xi}}_{j}} \vh_{j}^{(j)}$.

We then cancel the effect of the pre-multiplicative matrix ${{\bf{\Xi}}_{j}}$ using ${\bf{\Xi}}_{j}'$ in (\ref{Eq:Xi'}), and we obtain an estimate of the direction of the channel vector ${{\bf{h}}_{j}^{(j)}}$ as follows:
%the subspace of the instantaneous channel $\vh_{j1}^{(j)}$ as follows:
\begin{equation}\label{Eq:u_bar}
{\overline{\bf{u}}}_{j1} = \frac{{\bf{\Xi}}_{j}' \widetilde{\bf{u}}_{j1}}{\left\|{\bf{\Xi}}_{j}' \widetilde{\bf{u}}_{j1}\right\|_2}.
\end{equation}

Finally, the phase and amplitude ambiguities of the desired channel can be resolved by projecting the LS estimate onto the subspace spanned by ${\overline{\bf{u}}}_{j1}$:
%The estimation of $\vh_{j1}^{(j)}$ is given by:
\begin{equation}\label{Eq:CAProj}
{\bf{\widehat {{h}}}}_{j}^{(j)\text{CA}} = \frac{1}{\tau} \overline{\bf{u}}_{j1} \overline{\bf{u}}^{H}_{j1} \mathbf{Y}^{(j)} \vs^*,
\end{equation}
where the superscript ``CA" denotes the covariance-aided amplitude domain projection.

The algorithm is summarized below:
\begin{algorithm}
\caption{Covariance-aided Amplitude based Projection}
\begin{algorithmic}[1]\label{Alg:CA_estimator}
\STATE{Take the first eigenvector of ${\bf{\widetilde W}}_j{{\bf{\widetilde W}}_j^{H}}/C$ as in (\ref{Eq:tilde_uj1}), with ${\bf{\widetilde W}}_j$ being the filtered data signal.}

\STATE{Reverse the effect of the spatial filter using (\ref{Eq:u_bar}).}

\STATE{Resolve the phase and amplitude ambiguities by (\ref{Eq:CAProj}).
%$${\bf{\widehat {{h}}}}_{j}^{(j)\text{CA}} = \frac{1}{\tau} \overline{\bf{u}}_{j1} \overline{\bf{u}}^{H}_{j1} \mathbf{Y}^{(j)} \vs^*.$$
}
\end{algorithmic}
\end{algorithm}

The complexity of this proposed estimation scheme is briefly evaluated.

We note that the computation of the matrix inversions in (\ref{Eq:Xij1}) has a complexity order of $O(M^{2.37})$. However, these computations are performed in a preamble phase and their cost is negligible under the underlying assumption of channel stationarity implicitly made in this article. In practical systems, the matrix inversion in (\ref{Eq:Xij1}) is performed when the channel statistics are updated. Since the channel statistics are typically updated in a time scale much larger than the channel coherence time, i.e., the time scale for the applicability of Algorithm \ref{Alg:CA_estimator}, then their computational cost is negligible. % compared to the computational cost of Algorithm \ref{Alg:CA_estimator}.
Therefore, we can focus on the complexity of Algorithm \ref{Alg:CA_estimator} only.

In step 1, the spatial filtering of the data signals in (\ref{Eq:Wtilde}) and the computation of the covariance matrix ${\bf{\widetilde W}}_j{{\bf{\widetilde W}}_j^{H}}$ is performed along with the computation of the dominant eigenvector of an $M \times M$ matrix as in (\ref{Eq:tilde_uj1}). The former computation has a complexity order $O(CM^2)$ while, by applying the classical power method, the computation of the dominant vector has a complexity order $O(M^2)$. Both step 2 and step 3 require multiplications of matrices by $M$-dimensional vectors and thus both have a complexity order $O(M^2).$ Then, the global complexity of the algorithm is dominated by the complexity of step 1, which is $O(CM^2)$.

The ability for the above estimator to combine the advantages of the previously known angle and amplitude projection based estimators is now analyzed theoretically. In particular we are interested in the conditions under which full pilot decontamination can be achieved asymptotically in the limit of the number of antennas $M$ and data symbols $C$.
In order to facilitate the analysis, we introduce the following condition.

\textit{Condition C1}: The spectral norm of ${{\bf{R}}_{l}^{(j)}}$ is uniformly bounded:
\begin{equation}\label{Eq:condition1}
\forall M \in \mathbb{Z}^+ \text{ and } \forall l \in \{1,\ldots, L\}, \exists \zeta, \text{s.t.} \left\|{{\bf{R}}_{l}^{(j)}}\right\|_2 < \zeta,
\end{equation}
where $\mathbb{Z}^+$ is the set of positive integers and $\zeta$ is a constant.

Condition C1 can be interpreted as describing all the scenarios in which the channel energy is spread over a subspace whose dimension grows with $M$. Note that the same assumption can be found in some other papers,
e.g., \cite{wagner2012}. % \cite{adhikary:13}.
The corresponding physical condition is now investigated for the case of a ULA with a typical antenna spacing $D$ (less than or equal to half wavelength).
\begin{proposition}\label{propBoundedness}
Let $\Phi$ be the AoA support of a certain user. Let $p(\theta)$ be the probability density function of AoA of that user. If $p(\theta)$ is uniformly bounded, i.e., $p(\theta) < +\infty, \forall \theta \in \Phi$, and $\Phi$ lies in a closed interval that does not include the parallel directions with respect to the array , i.e., $0, \pi \notin \Phi$, then, the spectral norm of the user's covariance ${\bf{R}}$ is uniformly bounded.
%\begin{align}
%\forall M, \left\|{{\bf{R}}}\right\|_2 < + \infty.
%\end{align}
\end{proposition}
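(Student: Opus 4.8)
The plan is to exploit the Hermitian Toeplitz structure that the ULA steering vectors impose on $\mathbf{R}$, and to reduce the spectral-norm bound to a statement about trigonometric polynomials via Parseval's identity. First I would write the covariance explicitly. Because the path phases $\varphi_{p}$ in (\ref{Eq:simpleChanModel}) are i.i.d.\ and uniform, all cross terms between distinct paths vanish under expectation and the $P$ i.i.d.\ angles collapse to a single expectation over the AoA law, yielding
\[
\mathbf{R} = \beta^2\, \mathbb{E}_{\theta}\!\left\{ \mathbf{a}(\theta)\mathbf{a}(\theta)^H \right\} = \beta^2 \int_{\Phi} \mathbf{a}(\theta)\mathbf{a}(\theta)^H\, p(\theta)\, d\theta ,
\]
which is Hermitian positive semidefinite, so $\|\mathbf{R}\|_2 = \sup_{\|\mathbf{x}\|_2=1} \mathbf{x}^H\mathbf{R}\mathbf{x}$.

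Next I would compute the quadratic form, $\mathbf{x}^H\mathbf{R}\mathbf{x} = \beta^2 \int_{\Phi} |\mathbf{a}(\theta)^H\mathbf{x}|^2 p(\theta)\, d\theta$, and make the key observation that, from (\ref{Eq:steeVec}), $\mathbf{a}(\theta)^H\mathbf{x} = \sum_{m=0}^{M-1} x_m\, e^{j m \omega}$ is a trigonometric polynomial in the spatial frequency $\omega \triangleq \frac{2\pi D}{\lambda}\cos\theta$. Since the antenna spacing satisfies $D \le \lambda/2$, the map $\theta \mapsto \omega$ is a bijection from $[0,\pi]$ onto $[-\tfrac{2\pi D}{\lambda},\tfrac{2\pi D}{\lambda}] \subseteq [-\pi,\pi]$, so changing variables introduces no aliasing. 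This gives $\mathbf{x}^H\mathbf{R}\mathbf{x} = \beta^2 \int_{-\pi}^{\pi} |X(\omega)|^2 g(\omega)\, d\omega$, where $X(\omega) = \sum_m x_m e^{j m \omega}$ and, writing $c = \frac{2\pi D}{\lambda}$, the weight is $g(\omega) = p(\theta(\omega))/\sqrt{c^2-\omega^2}$ (extended by zero outside the image of $\Phi$), the denominator being the Jacobian $|d\omega/d\theta| = c\sin\theta = \sqrt{c^2-\omega^2}$.

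Here lies the crux, and the reason both hypotheses appear. The Jacobian $c\sin\theta$ vanishes exactly at the parallel (endfire) directions $\theta = 0,\pi$, so $g$ would be singular at $\omega = \pm c$; the assumption $0,\pi \notin \Phi$ keeps the support of $g$ bounded away from these endpoints, while the uniform bound on $p$ controls the numerator. Together they furnish a finite constant $G$, independent of $M$, with $g(\omega) \le G$ for all $\omega$. Parseval's identity then gives $\int_{-\pi}^{\pi} |X(\omega)|^2\, d\omega = 2\pi \sum_m |x_m|^2 = 2\pi\|\mathbf{x}\|_2^2 = 2\pi$, whence
\[
\mathbf{x}^H\mathbf{R}\mathbf{x} \le \beta^2 G \int_{-\pi}^{\pi} |X(\omega)|^2\, d\omega = 2\pi\beta^2 G ,
\]
a bound free of both $\mathbf{x}$ and $M$. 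Taking the supremum over unit $\mathbf{x}$ yields $\|\mathbf{R}\|_2 \le 2\pi\beta^2 G$, proving uniform boundedness.

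The main obstacle is the integrable singularity of the change-of-variables Jacobian at the endfire directions; everything else is routine bookkeeping. The two hypotheses are tailored precisely to neutralize it—separation of $\Phi$ from $\{0,\pi\}$ removes the singularity and boundedness of $p$ caps the density—after which Parseval does the rest. I would emphasize that the ULA geometry enters only through the fact that $\mathbf{a}(\theta)^H\mathbf{x}$ is a trigonometric polynomial whose energy over one period equals $\|\mathbf{x}\|_2^2$; this dimension-free identity is exactly what makes the resulting bound independent of $M$.
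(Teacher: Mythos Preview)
Your proof is correct, and it shares its core with the paper's: both make the change of variable $\omega = \tfrac{2\pi D}{\lambda}\cos\theta$, both identify the same weight/symbol $g(\omega)=p(\theta(\omega))/\sqrt{c^2-\omega^2}$ (the paper calls it $f$), and both use the two hypotheses in exactly the way you describe---excluding $0,\pi$ keeps the support away from the Jacobian's zeros, and boundedness of $p$ caps the numerator---to conclude that this symbol is uniformly bounded.

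The only genuine difference is the final step. The paper observes that $\mathbf{R}$ is Hermitian Toeplitz with entries equal to the Fourier coefficients of $f$, and then invokes the classical Grenander--Szeg\H{o} theorem to conclude that the spectrum of $\mathbf{R}$ lies in $(\mathrm{ess\,inf}\,f,\,\mathrm{ess\,sup}\,f)$, hence $\|\mathbf{R}\|_2\le M_f<\infty$. You instead work directly with the Rayleigh quotient and close with Parseval's identity, which is in effect an inline proof of the ``easy'' inequality $\|\mathbf{R}\|_2\le \mathrm{ess\,sup}\,f$ of that theorem. Your route is more elementary and self-contained (no external citation needed), while the paper's is terser and also records the asymptotic tightness $\lim_{M\to\infty}\|\mathbf{R}\|_2 = M_f$, which you do not obtain but which the proposition does not require.
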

\begin{proof}
\quad \emph{Proof:} See Appendix \ref{proof:propBoundedness}.
\end{proof}

Note that this result is hinted upon \cite{adhikary:13} by resorting to approximation of ${\bf{R}}$ by a circulant matrix. Our Proposition \ref{propBoundedness} here gives a formal proof of the previous approximated result.

As another interpretation of Condition C1, it is worth noting that when this condition is not satisfied, there is no guarantee that the asymptotic pairwise orthogonality of different users' channels holds. In other words, the quantity ${{\bf{h}}_{j}^{(j)H}} {{\bf{h}}_{l}^{(j)}}/M, l \neq j$ may not converge to zero, which is an adverse condition for all massive MIMO methods. However, our proposed methods still have significant performance gains under this adverse circumstance. Moreover, C1 is a sufficient condition and we believe it can be weakened.

\subsection{Asymptotic performance of the proposed CA estimator}\label{sec:PerformanceAnalysisCA}
We now look into the performance analysis of the proposed estimation scheme. Let us define
\begin{align}\label{Eq:Alphal}
\alpha_{l}^{(j)} \triangleq \mathop {\lim }\limits_{M \to \infty } \frac{1}{M} \operatorname{tr} \{ {\bf{\Xi}}_j {{\bf{R}}_{l}^{(j)}} {\bf{\Xi}}_j^H \}, \forall l = 1,\ldots, L.
\end{align}
\begin{theorem}\label{theoStatis}
Given Condition C1,
if the following inequality holds true
\begin{equation}\label{Eq:alphaj>l}
\alpha_{j}^{(j)} > \alpha_{l}^{(j)}, \forall l \neq j,
\end{equation}
then, the estimation error of (\ref{Eq:CAProj}) vanishes, i.e.,
\begin{equation}\label{Eq:limit_statis_multi-int}
\mathop {\lim }\limits_{M, C \to \infty } \frac{\left\|{\bf{\widehat {{h}}}}_{j}^{(j)\text{CA}} - {{\vh}}_{j}^{(j)} \right\|_2^2}{\left\|{{\vh}}_{j}^{(j)}\right\|_2^2} = 0.
\end{equation}
\end{theorem}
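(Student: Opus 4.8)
The plan is to follow the estimator through its three stages, reducing \eq{Eq:limit_statis_multi-int} to two asymptotic facts about the \emph{filtered} channels $\vg_l\triangleq{\bf{\Xi}}_j\vh_l^{(j)}$: that their powers separate and that they become pairwise orthogonal. First I would substitute the data model, which in the single-user case reads $\mW^{(j)}=\sum_{l=1}^{L}\vh_l^{(j)}\mX_l+\mZ^{(j)}$, into ${\bf{\widetilde W}}_j={\bf{\Xi}}_j\mW^{(j)}$ and form the empirical covariance. Since the data symbols are i.i.d.\ zero-mean unit-variance and independent of $\mZ^{(j)}$, we have $\frac1C\mX_l\mX_{l'}^H\to\delta_{ll'}$ and $\frac1C\mZ^{(j)}\mZ^{(j)H}\to\sigma_n^2\mI_M$, so that, conditioned on the channels,
\begin{equation*}
\frac{1}{C}{\bf{\widetilde W}}_j{\bf{\widetilde W}}_j^H \longrightarrow \sum_{l=1}^{L}\vg_l\vg_l^H + \sigma_n^2\,{\bf{\Xi}}_j{\bf{\Xi}}_j^H
\end{equation*}
as $C\to\infty$. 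Thus $\widetilde{\vu}_{j1}$ is, up to a perturbation that I would control through the finite-$C$ fluctuations of $\frac1C\mX_l\mX_{l'}^H-\delta_{ll'}$, the leading eigenvector of this matrix.

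Second, I would establish the two key properties as $M\to\infty$. Condition C1, together with the $\sigma_n^2\mI_M$ regularization built into ${\bf{\Xi}}_j$, guarantees that ${\bf{\Xi}}_j$, ${\bf{\Xi}}_j{\bf{\Xi}}_j^H$ and ${\bf{\Xi}}_j^H{\bf{\Xi}}_j$ all have uniformly bounded spectral norm. Writing $\vg_l^H\vg_{l'}=\vh_l^{(j)H}{\bf{\Xi}}_j^H{\bf{\Xi}}_j\vh_{l'}^{(j)}$ and invoking standard quadratic-form (trace) concentration for the multipath model then yields $\frac1M\|\vg_l\|_2^2\to\alpha_l^{(j)}$ and $\frac1M\vg_l^H\vg_{l'}\to0$ for $l\neq l'$, while the same bound gives $\|\sigma_n^2{\bf{\Xi}}_j{\bf{\Xi}}_j^H\|_2=O(1)$. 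Hence the limiting matrix carries, along each asymptotically orthogonal direction $\vg_l/\|\vg_l\|_2$, an eigenvalue $\approx M\alpha_l^{(j)}$, plus a remainder that does not grow with $M$.

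Third comes the eigenvector localization, which I expect to be the crux. The hypothesis $\alpha_j^{(j)}>\alpha_l^{(j)}$ opens a spectral gap of order $M\big(\alpha_j^{(j)}-\max_{l\neq j}\alpha_l^{(j)}\big)$ between the leading eigenvalue and the rest, whereas every perturbation—the $o(M)$ cross terms $\vg_l^H\vg_{l'}$, the $O(1)$ noise term, and the finite-$C$ fluctuations—is of strictly lower order. A Davis--Kahan eigenvector perturbation argument then forces $\widetilde{\vu}_{j1}\to\vg_j/\|\vg_j\|_2$, i.e.\ the residual angle $\phi$ obeys $\sin\phi\to0$ in the joint limit. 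I would then reverse the filter: because ${\bf{\Xi}}_j'{\bf{\Xi}}_j\vh_j^{(j)}=\vh_j^{(j)}$ and the perturbation direction lies in the range of ${\bf{\Xi}}_j$ spanned by the interference images $\{\vg_l\}_{l\neq j}$—on which ${\bf{\Xi}}_j'$ acts as the bounded projector $\mV_j\mV_j^H$ and is therefore not amplified beyond $O(\sqrt M)$—applying ${\bf{\Xi}}_j'$ in \eq{Eq:u_bar} and renormalizing gives $\overline{\vu}_{j1}\to\vh_j^{(j)}/\|\vh_j^{(j)}\|_2$. Showing that the eigenvector error survives the possibly ill-conditioned reversal ${\bf{\Xi}}_j'$, by exploiting this range structure, is the main technical obstacle.

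Finally I would insert this into the last projection \eq{Eq:CAProj}. Using $\frac1\tau\mY^{(j)}\vs^*=\sum_{l=1}^{L}\vh_l^{(j)}+\frac1\tau\mN^{(j)}\vs^*$, the estimate becomes $\overline{\vu}_{j1}\overline{\vu}_{j1}^H\big(\sum_l\vh_l^{(j)}+\frac1\tau\mN^{(j)}\vs^*\big)$. Since $\overline{\vu}_{j1}$ is asymptotically aligned with $\vh_j^{(j)}$, the desired term is preserved; each interference term contributes $|\overline{\vu}_{j1}^H\vh_l^{(j)}|\approx|\vh_j^{(j)H}\vh_l^{(j)}|/\|\vh_j^{(j)}\|_2=o(\sqrt M)$ by pairwise orthogonality; and the training-noise term, a projection of $\mN^{(j)}\vs^*$ (independent of the data-built $\overline{\vu}_{j1}$) onto a single direction, is $O(1)$. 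Dividing by $\|\vh_j^{(j)}\|_2^2=\Theta(M)$ drives both to zero, which yields \eq{Eq:limit_statis_multi-int}.
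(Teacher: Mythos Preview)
Your three-stage plan mirrors the paper's proof almost exactly: the paper too passes to the $C\to\infty$ limit to obtain $\sum_l\underline{\vh}_l\underline{\vh}_l^H+\sigma_n^2{\bf\Xi}_j{\bf\Xi}_j^H$, uses trace-lemma concentration (your ``quadratic-form concentration'') to get $\tfrac1M\|\underline{\vh}_l\|_2^2\to\alpha_l$ and asymptotic orthogonality, localizes the top eigenvector (via a direct eigen-equation plus Bauer--Fike rather than Davis--Kahan, but this is cosmetic), and then reverses the filter and projects the LS estimate.

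The one substantive difference is at the reversal step. The paper does \emph{not} use your range-structure argument; instead it makes the explicit additional assumption $\|{\bf\Xi}_j'\|_2<\infty$ (noting that $\|\mR_j^{(j)\dag}\|_2<\infty$ suffices) and then pushes the eigenvector convergence through ${\bf\Xi}_j'$ by direct inner-product manipulation. Your argument that ${\bf\Xi}_j'$ acts as $\mV_j\mV_j^H$ on the perturbation is correct for components of the form ${\bf\Xi}_j\vh_l^{(j)}$, but the eigenvector perturbation also receives a contribution from the noise term $\sigma_n^2{\bf\Xi}_j{\bf\Xi}_j^H$, which lies in $\mathrm{range}({\bf\Xi}_j)$ but not in $\mathrm{span}\{\vg_l\}_{l\neq j}$; writing it as ${\bf\Xi}_j\vw$ and bounding $\|{\bf\Xi}_j'{\bf\Xi}_j\vw\|_2=\|\mV_j\mV_j^H\vw\|_2$ still requires control of $\|\vw\|_2$ in terms of $\|{\bf\Xi}_j\vw\|_2$, i.e.\ a lower bound on the nonzero singular values of ${\bf\Xi}_j$---which is exactly the paper's implicit hypothesis in disguise. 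So you correctly identified the obstacle, but to close it you will need either that same extra assumption or a sharper argument isolating the noise contribution.
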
\begin{proof}
\quad \emph{Proof:}
For the sake of notational convenience, in this proof we assume the user in cell $j$ is the target user and thus drop the superscript $(j)$. The desired channel is denoted by $\vh_j = {\vh}_{j}^{(j)}$ and the interference channels are $\vh_l = {\vh}_{l}^{(j)}, l\neq j$.
Since $\vh_l, l=1,\ldots, L,$ is considered as $M \times 1$ complex Gaussian with the spatial correlation matrices $\mR_l = \mathbb{E} \{\vh_l \vh_l^H\}$, the channels can be factorized as \cite{paulraj2003}
\begin{equation}\label{Eq:spaWhi}
{{\mathbf{h}}_{l}} = {\mathbf{R}}_{l}^{1/2}{{\mathbf{h}}_{\text{W}}}_{l}, l=1,\ldots,L,
\end{equation}
where ${{\mathbf{h}}_{\text{W}}}_{l} \sim \mathcal{CN}({\mathbf{0}},{\mathbf{I}_{M}})$, is an i.i.d. $M\times1$ Gaussian vector with unit variance.
We build the proof of Theorem \ref{theoStatis} on the general correlation model (\ref{Eq:spaWhi}).
The proof consists in three parts, corresponding to the three steps in Algorithm \ref{Alg:CA_estimator} respectively. More specifically, Lemma \ref{lemmaEigenvectorFiltered} (and the intermediate results towards Lemma \ref{lemmaEigenvectorFiltered}) is the first part of the proof. It shows that $\widetilde{\bf{u}}_{j1}$ aligns asymptotically with the direction of the filtered channel vector $\underline{\vh}_j = {\bf{\Xi}}_j {\mathbf{h}}_{j}$. The second part of the proof is provided in Lemma \ref{lemmaEigenvectorOrig}, which proves that after canceling the effect of the spatial filter using ${\bf{\Xi}}'_j$, we obtain the direction of the true channel ${\mathbf{h}}_{j}$ in ${\overline{\bf{u}}}_{j1}$. The final part of the proof shows that by projecting the LS estimate onto the subspace of ${\overline{\bf{u}}}_{j1}$, we resolve the phase and amplitude of the true channel.

\begin{lemma}\label{lemmaEigenvectorFiltered}
Given Condition C1, if $\alpha_{j}^{(j)} > \alpha_{l}^{(j)}, \forall l \neq j$, then there exists a unique $0 \leq \phi < 2\pi$, such that
%Assuming uniform boundedness for the spectral norm of covariance matrices $\mR_1$ and $\mR_2$,
\begin{align}\label{Eq:lemmaEigenvectorFiltered}
\mathop {\lim }\limits_{M, C \to \infty } {\left\| \frac{\underline{\vh}_j}{\left\| \underline{\vh}_j \right\|_2} - \widetilde{\bf{u}}_{j1} e^{j\phi} \right\|_2 } = 0.
\end{align}
where
$\underline{\vh}_l \triangleq {\bf{\Xi}}_j {\mathbf{h}}_{l}, l = 1, \ldots, L$.
\end{lemma}
\begin{proof}
\quad \emph{Proof:}
The proof of Lemma \ref{lemmaEigenvectorFiltered} relies on several intermediate results, namely Lemma \ref{lemmaSpectralNormPerturbation} - Lemma \ref{lemmaEigenvector}.
\begin{lemma}\label{lemmaSpectralNormPerturbation}
%Assuming uniform boundedness for covariance matrices $\mR_l, l = 1, 2$,
Under Condition C1,
the spectral norm of ${\bf{\Xi}}_j{\bf{\Xi}}_j^H$ satisfies:
\begin{align}
\mathop {\lim }\limits_{M \to \infty } \frac{1}{M} \left\| {\bf{\Xi}}_j{\bf{\Xi}}_j^H \right\|_2 = 0.
\end{align}

\end{lemma}
\begin{proof}
\quad \emph{Proof:} See Appendix \ref{proof:lemmaSpectralNormPerturbation}.
\end{proof}
Lemma \ref{lemmaSpectralNormPerturbation} indicates that the spectral norm of the covariance of the noise (after multiplying ${\bf{\Xi}}_j$) is bounded and does not scale with $M$. This conclusion will be exploited when we prove in Lemma \ref{lemmaEigenvector} that the impact of noise on the dominant eigenvector/eigenvalue vanishes.

\begin{lemma}\label{lemmaQuadratic}
\cite{evans2000} Let $\mA_M$ be a deterministic $M \times M$ complex matrix with uniformly bounded spectral radius for all $M$. Let $\vq = \frac{1}{\sqrt(M)}[q_1, \cdots, q_M]^T$ where $q_i, \forall i = 1, \cdots, M$ is i.i.d. complex random variable with zero mean, unit variance, and finite eighth moment. Let $\vr$ be a similar vector independent of $\vq$. Then as $M \rightarrow \infty$,
\begin{equation}\label{Eq:qadratic1}
\vq^H \mA_M \vq {\xrightarrow{\text{a.s.}}} \frac{1}{M} \operatorname{tr}\{ \mA_M\},
\end{equation}
and
\begin{equation}\label{Eq:qadratic2}
\vq^H \mA_M \vr {\xrightarrow{\text{a.s.}}} 0,
\end{equation}
where ${\xrightarrow{\text{a.s.}}}$ denotes almost sure convergence.
\end{lemma}
Note that in this paper, the condition on the finite eighth moment always holds, as when we apply Lemma \ref{lemmaQuadratic}, the components of the vector of interest are i.i.d. complex Gaussian variables. It is well known that a complex Gaussian variable with zero mean, unit variance has finite eighth moment.

\begin{lemma}\label{lemmaInnerProd}
Given Condition C1,
\begin{align}
\mathop {\lim }\limits_{M \to \infty } \frac{1}{M} {\underline{\vh}_j^H} \underline{\vh}_l & = 0, \forall l \neq j \label{Eq:hjhl_underline}\\
\mathop {\lim }\limits_{M \to \infty } \frac{1}{M} \underline{\vh}_l^H \underline{\vh}_l & = \alpha_l, l = 1, \ldots, L. \label{Eq:hlhl_underline}
\end{align}

\end{lemma}

\begin{proof}
\quad \emph{Proof:} See Appendix \ref{proof:lemmaInnerProd}.
\end{proof}

\begin{lemma}\label{lemmaEigenvector}
When Condition C1 is satisfied,
%Assuming uniform boundedness for the spectral norm of covariance matrices $\mR_1$ and $\mR_2$,
\begin{align}\label{Eq:lemmaEigenvector}
\mathop {\lim }\limits_{M,C \to \infty } \left\| \frac{ {\bf{\widetilde W}}_j{{\bf{\widetilde W}}_j^{H}} }{MC} \frac{\underline{\vh}_j}{\left\|\underline{\vh}_j\right\|_2} - \alpha_j \frac{\underline{\vh}_j}{\left\|\underline{\vh}_j\right\|_2} \right\|_2 &= 0,
\end{align}
\end{lemma}
\begin{proof}
\quad \emph{Proof:} See Appendix \ref{proof:lemmaEigenvector}.
\end{proof}
Lemma \ref{lemmaEigenvector} proves that as $M,C \rightarrow \infty$, $\alpha_j$ is an asymptotic eigenvalue of the random matrix ${\bf{\widetilde W}}_j{{\bf{\widetilde W}}_j^{H}}/{MC}$, with its corresponding eigenvector converging to ${\underline{\vh}_j}/{\left\|\underline{\vh}_j\right\|_2}$ up to a random phase.

We now return to the proof of Lemma \ref{lemmaEigenvectorFiltered}. Since $\alpha_j > \alpha_l, \forall l \neq j$, one may readily obtain from Lemma \ref{lemmaEigenvector} and (\ref{Eq:hjhl_underline}) that
\begin{equation}
\mathop {\lim }\limits_{M,C \to \infty } \lambda_1 \left\{ \frac{ {\bf{\widetilde W}}_j{{\bf{\widetilde W}}_j^{H}} }{MC} \right\} = \alpha_j,
\end{equation}
and that there exists a unique $0 \leq \phi < 2\pi$, such that
\begin{equation}\label{Eq:h_udl_converge}
\mathop {\lim }\limits_{M,C \to \infty } {\left\| \frac{\underline{\vh}_j}{\left\| \underline{\vh}_j \right\|_2} - \ve_1\left\{ \frac{ {\bf{\widetilde W}}_j{{\bf{\widetilde W}}_j^{H}} }{MC} \right\} e^{j\phi} \right\|_2 } = 0,
\end{equation}
which completes the proof of Lemma \ref{lemmaEigenvectorFiltered}.
\end{proof}

Now we show the second part of the proof of Theorem \ref{theoStatis}. Note that in this part we make the implicit assumption that the spectral norm of ${\bf{\Xi}}_j'$ satisfies $\left\| {\bf{\Xi}}_j' \right\|_2 < + \infty$. A sufficient (but not necessary) condition of such an assumption is that the spectral norm of $\mR_j^\dag$ is finite.
\begin{lemma}\label{lemmaEigenvectorOrig}
Given (\ref{Eq:lemmaEigenvectorFiltered}), we have
\begin{equation}\label{Eq:h_converge}
\mathop {\lim }\limits_{M,C \to \infty } {\left\| \frac{{\vh}_j}{\left\| {\vh}_j \right\|_2} - \bar{\bf{u}}_{j1} e^{j\phi} \right\|_2 } = 0.
\end{equation}
\end{lemma}
\begin{proof}
\quad \emph{Proof:} See Appendix \ref{proof:lemmaEigenvectorOrig}.
\end{proof}

The final part of the proof of Theorem \ref{theoStatis} can be found in Appendix \ref{proof:theoStatis}, which corresponds to step 3 of Algorithm \ref{Alg:CA_estimator}. The proof shows that projecting the LS estimate onto the subspace of $\bar{\bf{u}}_{j1}$ will lead to noise-free estimate asymptotically as $M, C \rightarrow \infty$. This concludes the proof of Theorem \ref{theoStatis}.
\end{proof}

Interestingly, condition (\ref{Eq:alphaj>l}) in Theorem \ref{theoStatis} can be replaced with
\begin{equation}
\left\|{{\bf{\Xi}}_{j}} {\bf{{{h}}}}_{j}^{(j)} \right\|_2 > \left\| {{\bf{\Xi}}_{j}} {\bf{{{h}}}}_{l}^{(j)}\right\|_2, \forall l \neq j,
\end{equation}
which indicates that under suitable conditions on the spectral norm of channel covariance, after multiplying the filter ${\bf{\Xi}}_j$, if the power of the desired channel is higher than that of interference channel, then, pilot contamination disappears asymptotically, along with noise.

Note that we have so far no assumption on antenna placement in the analysis, other than the requirement for uniformly boundedness of the spectral norm of channel covariance.
In the sequel we look into a specific model of ULA as an example and seek to further understand the physical meaning of the proposed method.

We still assume ${{\mathbf{h}}_{j}^{(j)}}$ is the channel of interest.
Denote its angular support as $\Phi_d$.
Decompose the interference channel ${{\mathbf{h}}_{l}^{(j)}}, \forall l \neq j$, as follows:
\begin{equation}\label{Eq:decomposeInt}
{{\mathbf{h}}_{l}^{(j)}} = {{\mathbf{h}}_{l\text{i}}^{(j)}} + {{\mathbf{h}}_{l\text{o}}^{(j)}},
\end{equation}
where
\begin{align}
{{\mathbf{h}}_{l\text{i}}^{(j)}} &= \frac{{\beta_{l}^{(j)}} }{\sqrt{P}} \sum\limits_{\theta \in \Phi_d} {{\mathbf{a}}({\theta}){e^{i\varphi_{\theta}}}} \\
{{\mathbf{h}}_{l\text{o}}^{(j)}} &= \frac{{\beta_{l}^{(j)}} }{\sqrt{P}} \sum\limits_{\theta \notin \Phi_d} {{\mathbf{a}}({\theta}){e^{i\varphi_{\theta}}}},
\end{align}
which means ${{\mathbf{h}}_{l\text{i}}^{(j)}}$ is the residual multipath component of the interference channel within the AoA region $\Phi_d$ of the desired channel,
%sum of the multipath components with AoAs falling in $\Phi_d$, i.e., the angular support of desired channel,
while ${{\mathbf{h}}_{l\text{o}}^{(j)}}$ is the multipath component which is outside $\Phi_d$.

\begin{theorem}\label{theoAnglePower}
%Consider a ULA base station and assume uniform boundedness on $\left\|{{\bf{R}}_{l}^{(j)}}\right\|_2, l = 1, \ldots, L$. If %interference channel that falls into the angular support $\Phi_d$
For a ULA base station, under Condition C1,
if the residual multipath component of the interference channel satisfies:
\begin{equation}
\forall l \neq j, \left\| {\bf{\Xi}}_j {{\mathbf{h}}_{l\text{i}}^{(j)}} \right\|_2 < \left\|{\bf{\Xi}}_j {{\mathbf{h}}_{j}^{(j)}} \right\|_2,
\end{equation}
then, the estimation error of the estimator (\ref{Eq:CAProj}) vanishes:
\begin{equation}\label{Eq:limit_partialOverlap}
\mathop {\lim }\limits_{M, C \to \infty } \frac{\left\|{\bf{\widehat {{h}}}}_{j}^{(j)\text{CA}} - {{\vh}}_{j}^{(j)} \right\|_2^2}{\left\|{{\vh}}_{j}^{(j)}\right\|_2^2} = 0.
\end{equation}
\end{theorem}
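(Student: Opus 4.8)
The plan is to reduce this theorem to Theorem~\ref{theoStatis} by showing that, for a ULA, the spatial filter ${\bf{\Xi}}_j$ asymptotically annihilates the out-of-support interference component ${\bf{h}}_{l\text{o}}^{(j)}$, so that the power of the filtered interference is governed entirely by its in-support part ${\bf{h}}_{l\text{i}}^{(j)}$. Once this is established, the hypothesis $\|{\bf{\Xi}}_j {\bf{h}}_{l\text{i}}^{(j)}\|_2 < \|{\bf{\Xi}}_j {\bf{h}}_{j}^{(j)}\|_2$ translates, in the normalized asymptotic sense of \eqref{Eq:Alphal}, into the condition $\alpha_j^{(j)} > \alpha_l^{(j)}$ required by \eqref{Eq:alphaj>l}, and the conclusion \eqref{Eq:limit_partialOverlap} follows immediately from Theorem~\ref{theoStatis}. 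Throughout, I read the stated hypothesis as holding for all large $M$, so that dividing both sides by $M$ preserves the inequality and yields the desired strict ordering of the $\alpha$'s in the limit.

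The central step, and the main obstacle, is to prove the annihilation property
\begin{equation}
\lim_{M\to\infty} \frac{1}{M} \operatorname{tr}\{ {\bf{\Xi}}_j {\bf{R}}_{l\text{o}}^{(j)} {\bf{\Xi}}_j^H \} = 0, \quad \forall l \neq j,
\end{equation}
where ${\bf{R}}_{l\text{o}}^{(j)} \triangleq \mathbb{E}\{ {\bf{h}}_{l\text{o}}^{(j)} {\bf{h}}_{l\text{o}}^{(j)H} \}$. This is where the ULA structure and the non-overlap geometry enter. I would argue that ${\bf{R}}_{l\text{o}}^{(j)}$ has its energy concentrated on steering vectors ${\bf{a}}(\theta)$ with $\theta \notin \Phi_d$, whereas ${\bf{\Xi}}_j = ( \sum_l {\bf{R}}_l^{(j)} + \sigma_n^2 {\bf{I}}_M )^{-1} {\bf{R}}_j^{(j)}$ inherits from ${\bf{R}}_j^{(j)}$ a range that asymptotically aligns with the steering vectors supported on $\Phi_d$. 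Since steering vectors drawn from disjoint angular intervals become asymptotically orthogonal as $M \to \infty$ --- the very mechanism that underlies the MMSE decontamination under the non-overlap condition recalled in Section~\ref{sec:mmse} --- the filter passes the $\Phi_d$ directions while suppressing the complementary ones, forcing the normalized trace above to vanish.

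Given the annihilation property, the remainder is routine. Because the multipath phases are i.i.d.\ over paths, the in- and out-of-support components of \eqref{Eq:decomposeInt} are uncorrelated, so ${\bf{R}}_l^{(j)} = {\bf{R}}_{l\text{i}}^{(j)} + {\bf{R}}_{l\text{o}}^{(j)}$ with ${\bf{R}}_{l\text{i}}^{(j)} \triangleq \mathbb{E}\{ {\bf{h}}_{l\text{i}}^{(j)} {\bf{h}}_{l\text{i}}^{(j)H}\}$. Substituting into the definition \eqref{Eq:Alphal} and dropping the vanishing ${\bf{R}}_{l\text{o}}^{(j)}$ term yields
\begin{equation}
\alpha_l^{(j)} = \lim_{M\to\infty} \frac{1}{M} \operatorname{tr}\{ {\bf{\Xi}}_j {\bf{R}}_{l\text{i}}^{(j)} {\bf{\Xi}}_j^H \} = \lim_{M\to\infty} \frac{1}{M} \left\| {\bf{\Xi}}_j {\bf{h}}_{l\text{i}}^{(j)} \right\|_2^2,
\end{equation}
the last equality holding almost surely by the quadratic-form convergence of Lemma~\ref{lemmaQuadratic} (whose boundedness hypothesis is met under Condition~C1 together with $\|{\bf{\Xi}}_j\|_2 < \infty$). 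Since ${\bf{h}}_j^{(j)}$ lies entirely within $\Phi_d$, we likewise have $\alpha_j^{(j)} = \lim_{M\to\infty} \frac{1}{M} \|{\bf{\Xi}}_j {\bf{h}}_j^{(j)}\|_2^2$. The hypothesis then gives $\alpha_j^{(j)} > \alpha_l^{(j)}$ for all $l \neq j$, which is exactly \eqref{Eq:alphaj>l}; invoking Theorem~\ref{theoStatis} completes the proof.
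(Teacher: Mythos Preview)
Your proposal is correct and shares the paper's key step: invoking the asymptotic orthogonality of ULA steering vectors from disjoint angular supports (Lemma~3 of \cite{yin:13}) to show that ${\bf{\Xi}}_j$ asymptotically annihilates the out-of-support component ${\bf{h}}_{l\mathrm{o}}^{(j)}$, so that only ${\bf{h}}_{l\mathrm{i}}^{(j)}$ survives in the filtered interference. The only difference is organizational: you reduce directly to Theorem~\ref{theoStatis} by verifying condition \eqref{Eq:alphaj>l}, whereas the paper re-traces the eigenvector argument inline (echoing Lemmas~\ref{lemmaEigenvector}--\ref{lemmaEigenvectorOrig} and Appendix~\ref{proof:theoStatis}); your route is slightly more economical but rests on the same machinery.
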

\begin{proof}
\quad \emph{Proof:}
See Appendix \ref{proof:theoAnglePower}.
\end{proof}
Theorem \ref{theoAnglePower} further confirms the fact that for a base station equipped with ULA, only the interference multipath components that overlap with those of the desired channel affect the performance of our pilot decontamination method. In other words, the spatial filter ${\bf{\Xi}}_j$ removes the energy located in all interference multipath originating from directions that do not overlap with those of the desired channel. It is then sufficient for the energy of the {\em residual} interference components to be below that of the desired channel to allow for a full decontamination.

\subsection{Generalization to multiple users per cell}\label{subsec:multiuser}
Now we generalize the covariance-aided amplitude based projection into multi-user setting where $K$ users are served simultaneously in each cell. We consider the estimation of user channel ${\vh}_{jk}^{(j)}$ in the reminder of this section. %We assume user $k$ in cell $j$ is the desired user.

Define a matrix ${{\mH}}_{j\backslash k}^{(j)}$ as a sub-matrix of ${{\mH}}_{j}^{(j)}$ after removing its $k$-th column,
\begin{equation}\label{Eq:subChannel}
{{\mH}}_{j\backslash k}^{(j)} \triangleq \begin{bmatrix}{\vh}_{j1}^{(j)} & \cdots & {\vh}_{j(k-1)}^{(j)}& {\vh}_{j(k+1)}^{(j)} & \cdots & {\vh}_{jK}^{(j)} \end{bmatrix}.
\end{equation}
A corresponding estimate of (\ref{Eq:subChannel}), denoted by $\widehat{{\mH}}_{j\backslash k}^{(j)}$, is obtained by removing the $k$-th column of $\widehat{{\mH}}_{j}^{(j)}$, which can be an LS estimate, MMSE estimate, or other linear/non-linear estimate of ${{\mH}}_{j}^{(j)}$. For demonstration purpose only, in this paper we use the simplest LS estimate, which already shows very good performance.

In order to adapt the method in section \ref{subsec:singleUser} to multi-user scenario, we propose to first neutralize the intra-cell interference with a Zero-Forcing (ZF) filter ${\bf{T}}_{jk}$ based on the LS estimate $\widehat{{\mH}}_{j\backslash k}^{(j)}$, and then apply the spatial filter ${{\bf{\Xi}}_{jk}}$. After these two filters, the data signal is now
\begin{equation}\label{Eq:Wtilde2}
{\bf{\widetilde W}}_{jk} \triangleq {{\bf{\Xi}}_{jk}} {\bf{T}}_{jk} {\bf{W}}^{(j)},
\end{equation}
where
\begin{equation}\label{Eq:Tjk}
{\bf{T}}_{jk} \triangleq {\bf{I}}_M - \widehat{{\mH}}_{j\backslash k}^{(j)}(\widehat{{\mH}}_{j\backslash k}^{(j)H} \widehat{{\mH}}_{j\backslash k}^{(j)})^{-1}\widehat{{\mH}}_{j\backslash k}^{(j)H},
\end{equation}
and %Similarly to the single user case, we introduce the spatial filter ${{\bf{\Xi}}_{jk}}$ for user $k$ in cell $j$:
\begin{equation}\label{Eq:Xijk}
{{\bf{\Xi}}_{jk}} \triangleq {\left( {\sum\limits_{l = 1}^L {{{\bf{R}}_{lk}^{(j)}}}  + \sigma _n^2{{\bf{I}}_M}} \right)^{ - 1}} {{\bf{R}}_{jk}^{(j)}}.
\end{equation}
%After applying this ZF filter and the spatial filter ${{\bf{\Xi}}_{jk}}$, the received data signal is now
The rest of this method proceeds as in the single user setting.
Take the dominant eigenvector of ${\bf{\widetilde W}}_{jk} {\bf{\widetilde W}}_{jk}^{H} / C$:
\begin{equation}
\widetilde{\bf{u}}_{jk1} = \ve_1\{ \frac{1}{C} {\bf{\widetilde W}}_{jk} {\bf{\widetilde W}}_{jk}^{H} \}.
\end{equation}
The estimate of the direction of ${\vh}_{jk}^{(j)}$ is obtained by
\begin{equation}\label{Eq:u_bar_jk1}
\overline{\bf{u}}_{jk1} = \frac{{\bf{\Xi}}_{jk}' \widetilde{\bf{u}}_{jk1}}{\left\|{\bf{\Xi}}_{jk}' \widetilde{\bf{u}}_{jk1}\right\|_2},
\end{equation}
where
\begin{equation}
{\bf{\Xi}}_{jk}' \triangleq {{\bf{R}}_{jk}^{(j)\dag}}{\left( {\sum\limits_{l = 1}^L {{{\bf{R}}_{lk}^{(j)}}}  + \sigma _n^2{{\bf{I}}_M}} \right)}.
\end{equation}
Finally the phase and amplitude ambiguities are resolved by the training sequence, and we have the estimate of ${\vh}_{jk}^{(j)}$:
\begin{equation}\label{Eq:CAProjMU}
{\bf{\widehat {{h}}}}_{jk}^{(j)\text{CA}}= \frac{1}{\tau} \overline{\bf{u}}_{jk1} \overline{\bf{u}}^{H}_{jk1} \mathbf{Y}^{(j)} \mS^H.
\end{equation}
Note that in this method, we build the ZF type filter ${\bf{T}}_{jk}$ based on a rough LS estimate. Further improvements can be attained with higher quality estimates at the cost of
higher complexity. As a simple example, we can reduce the effect of noise on the estimate $\widehat{{\mH}}_{j\backslash k}^{(j)}$ by first applying EVD of $\mW^{(j)} \mW^{(j)H}/C$,
then removing the subspace where the noise lies, and finally performing LS estimation. These extensions are out of the scope of this paper.

\section{Low-complexity alternatives}\label{sec:lowComplexityAlter}

In this section, we propose two alternatives of the method shown in section \ref{sec:cova-aidedBlind}, aiming at lower computational complexity at the cost of mild performance losses.
\subsection{Subspace and amplitude based projection}
The low-rankness of channel covariance implies that the uplink received desired signal lives in a reduced subspace. By projecting the received data signal $\mW^{(j)}$ onto the signal space of ${\mathbf{R}}^{(j)}_{jk}$, we are able to preserve the signal from user $k$ in cell $j$ while removing the interference and noise that live in its complementary subspace. In the following, we show a subspace-based signal space projection method that relies on the covariance of desired channel only.
For ease of exposition, we simplify the system setup to single user per cell. Let the user in cell $j$ be the target user. The EVD of the covariance of the desired channel is
\begin{equation}
{\mathbf{R}}^{(j)}_{j} = \mV_{j} \ \bm{\Sigma}_{j} \mV_{j}^H,
\end{equation}
where the diagonal entries of $\bm{\Sigma}_{j}$ contains the non-negligible eigenvalues of ${\mathbf{R}}^{(j)}_{j}$.
Then we project the received data signal onto the signal space of ${\mathbf{R}}^{(j)}_{j}$, or the column space of $\mV_{j}$:
\begin{equation}
{\bf{\overline W}}_{j} \triangleq \mV_{j} \mV_{j}^H {\bf{W}}^{(j)}.
\end{equation}
The rest of this method follows the same idea as the covariance-aided amplitude based projection scheme. Taking the eigenvector corresponding to the largest eigenvalue of ${\bf{\overline W}}_{j}{{\bf{\overline W}}_{j}^{H}}/C$:
\begin{equation}
\overline{\bf{u}}_{j1} = \ve_1\left\{ \frac{1}{C} {\bf{\overline W}}_{j}{{\bf{\overline W}}_{j}^{H}} \right\},
\end{equation}
the channel estimate of ${{\mathbf{h}}}_{j}^{(j)}$ is given by
\begin{equation}\label{Eq:subspace+blind}
{{\widehat{{\mathbf{h}}}}_{j}^{(j)\text{SA}}} = \frac{1}{\tau} \overline{\bf{u}}_{j1} \overline{\bf{u}}^{H}_{j1} \mathbf{Y}^{(j)} \vs^*,
\end{equation}
where the superscript ``SA" stands for ``subspace and amplitude based projection". Note that this method does not require the covariance of interference channels or variance of noise.
It explicitly relies on the assumption that the desired covariance matrix has a low-dimensional signal subspace, with some degradations expected when this condition is not realized in practice.
In fact, if ${\mathbf{R}}^{(j)}_{j}$ has full rank, this method degrades to pure amplitude based projection.

Note that this ``SA" estimator has lower complexity than the ``CA" estimator (\ref{Eq:CAProj}) in the sense that 1) ``SA" estimator does not require the statistical knowledge of the interference channels or the variance of the noise, and 2) ``SA" estimator skips step 2 in Algorithm \ref{Alg:CA_estimator}.

The physical condition under which full decontamination is achieved with this method is shown below in the case of a ULA. We denote the angular support of desired channel ${\mathbf{h}}^{(j)}_{j}$ by $\Phi_d$ and the multipath components of the interference channel ${{\mathbf{h}}_{l}^{(j)}}$ falling in $\Phi_d$ as ${{\mathbf{h}}_{l\text{i}}^{(j)}}$.
\begin{theorem}\label{theoAnglePower2}
For a ULA base station, if the power of interference channel that falls into the angular support $\Phi_d$ satisfies
\begin{align}
\forall l \neq j, \left\| {{\mathbf{h}}_{l\text{i}}^{(j)}} \right\|_2 < \left\| {{\mathbf{h}}_{j}^{(j)}} \right\|_2,
\end{align}
and the channel covariance satisfies
\begin{align}\label{Eq:bound2}
\forall M \in \mathbb{Z}^+, \forall l \neq j, \left\| {\mathbf{R}}^{(j)\frac{1}{2}}_{j} \mV_{j} \mV_{j}^H {\mathbf{R}}^{(j)\frac{1}{2}}_{l} \right\|_2 < + \infty,
\end{align}
then, the estimation error of the estimator (\ref{Eq:subspace+blind}) vanishes
\begin{equation}\label{Eq:limit_partialOverlap2}
\mathop {\lim }\limits_{M, C \to \infty } \frac{\left\|{\bf{\widehat {{h}}}}_{j}^{(j)\text{SA}} - {{\vh}}_{j}^{(j)} \right\|_2^2}{\left\|{{\vh}}_{j}^{(j)}\right\|_2^2} = 0.
\end{equation}
\end{theorem}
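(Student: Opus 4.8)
The plan is to reuse the three-step architecture of the proof of Theorem~\ref{theoStatis}, replacing the spatial filter ${\bf{\Xi}}_j$ by the orthogonal projector $\mP_j \triangleq \mV_j\mV_j^H$ onto the signal subspace of $\mR_j$, and exploiting the crucial simplification that $\mP_j$ leaves the desired channel untouched. As in that proof I drop the superscript $(j)$, writing $\vh_j$, $\vh_l$ ($l\neq j$), and $\vh_{li}$, $\vh_{lo}$ for the decomposition (\ref{Eq:decomposeInt}). The starting observation is that, since $\vh_j=\mR_j^{1/2}\vh_{\mathrm{W}j}$ lies in the range of $\mR_j^{1/2}$, which coincides with the column space of $\mV_j$, one has the identity $\mP_j\vh_j=\vh_j$. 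Consequently no filter-inversion step (the analog of ${\bf{\Xi}}_j'$ in step~2 of Algorithm~\ref{Alg:CA_estimator}) is needed: the dominant eigenvector $\overline{\bf{u}}_{j1}$ of $\overline{\bf{W}}_j\overline{\bf{W}}_j^H/C$ already estimates the direction of $\vh_j$ itself, which is exactly why the ``SA'' estimator omits step~2.

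First I would establish the counterpart of Lemma~\ref{lemmaEigenvector}. Letting $C\to\infty$, the empirical correlation converges as $\tfrac1C\overline{\bf{W}}_j\overline{\bf{W}}_j^H\to\mP_j\big(\sum_{l=1}^L\vh_l\vh_l^H+\sigma_n^2\mI_M\big)\mP_j=\vh_j\vh_j^H+\sum_{l\neq j}(\mP_j\vh_l)(\mP_j\vh_l)^H+\sigma_n^2\mP_j$, where I used $\mP_j\vh_j=\vh_j$ and $\mP_j^2=\mP_j$. Dividing by $M$ and applying this matrix to $\vh_j/\|\vh_j\|_2$, the $l=j$ term yields $\rho_j\,\vh_j/\|\vh_j\|_2$ with $\rho_j\triangleq\lim_M \tfrac1M\|\vh_j\|_2^2$; the noise term has norm $O(\sigma_n^2/M)\to0$ since $\|\mP_j\|_2=1$; and each interference term vanishes because $\tfrac1M\vh_j^H\mP_j\vh_l=\tfrac1M\vh_{\mathrm{W}j}^H\mR_j^{1/2}\mP_j\mR_l^{1/2}\vh_{\mathrm{W}l}\xrightarrow{\text{a.s.}}0$ by the cross-term part (\ref{Eq:qadratic2}) of Lemma~\ref{lemmaQuadratic}, whose boundedness hypothesis is supplied precisely by condition (\ref{Eq:bound2}) on $\|\mR_j^{1/2}\mV_j\mV_j^H\mR_l^{1/2}\|_2$. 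Hence $\vh_j/\|\vh_j\|_2$ is an asymptotic eigenvector with eigenvalue $\rho_j$.

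Next I would argue that $\rho_j$ is the \emph{dominant} eigenvalue, which is where the ULA geometry and the power hypothesis enter, and which I expect to be the main obstacle. The key is to show that each projected interferer obeys $\tfrac1M\|\mP_j\vh_l\|_2^2\to\tfrac1M\|\vh_{li}\|_2^2$. This rests on the fact that, for a ULA, $\mR_j$ is Hermitian Toeplitz and its signal eigenspace $\mathrm{span}(\mV_j)$ is asymptotically the span of the steering vectors $\{\mathbf a(\theta):\theta\in\Phi_d\}$ --- the same circulant (Toeplitz) eigenvector approximation underlying Proposition~\ref{propBoundedness} and \cite{adhikary:13}. Consequently $\mP_j$ asymptotically preserves the in-support component ($\mP_j\vh_{li}\to\vh_{li}$) and annihilates the out-of-support component ($\mP_j\vh_{lo}\to\mathbf{0}$ in the normalized $\tfrac1{\sqrt M}$ sense), exactly as ${\bf{\Xi}}_j$ does in Theorem~\ref{theoAnglePower}. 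Together with the asymptotic mutual orthogonality of the projected channels (again via (\ref{Eq:qadratic2})), the limiting matrix is diagonalized by the directions $\{\mP_j\vh_l/\|\mP_j\vh_l\|_2\}_l$ with eigenvalues $\{\tfrac1M\|\mP_j\vh_l\|_2^2\}_l$; the hypothesis $\|\vh_{li}\|_2<\|\vh_j\|_2$ then guarantees that $\rho_j=\lim\tfrac1M\|\vh_j\|_2^2$ strictly exceeds every interference eigenvalue $\lim\tfrac1M\|\vh_{li}\|_2^2$. The resulting spectral gap forces $\overline{\bf{u}}_{j1}\to e^{j\phi}\vh_j/\|\vh_j\|_2$ for some phase $\phi$, completing the analog of Lemma~\ref{lemmaEigenvectorFiltered}.

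Finally, mirroring the last step of the proof of Theorem~\ref{theoStatis} (Appendix~\ref{proof:theoStatis}), I would substitute the least-squares term $\tfrac1\tau\mathbf Y^{(j)}\vs^*=\vh_j+\sum_{l\neq j}\vh_l+\tfrac1\tau\mathbf N^{(j)}\vs^*$ into (\ref{Eq:subspace+blind}). Because the outer product $\overline{\bf{u}}_{j1}\overline{\bf{u}}_{j1}^H$ is phase-invariant, it converges to $\vh_j\vh_j^H/\|\vh_j\|_2^2$, so the desired term is returned exactly, the interference terms vanish through $\tfrac1M\vh_j^H\vh_l\to0$, and the projected noise $\overline{\bf{u}}_{j1}^H(\tfrac1\tau\mathbf N^{(j)}\vs^*)$ stays $O(1)$ while $\|\vh_j\|_2=\Theta(\sqrt M)\to\infty$, hence is negligible in the normalized error (\ref{Eq:limit_partialOverlap2}). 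The chief difficulty throughout remains the Toeplitz-eigenvector identification $\mathrm{span}(\mV_j)\approx\mathrm{span}\{\mathbf a(\theta):\theta\in\Phi_d\}$ that converts the hypothesis on $\|\vh_{li}\|_2$ into spectral dominance of the projected desired channel; a secondary care point is that, with several interferers, dominance must be read jointly, which the asymptotic mutual orthogonality of the $\{\mP_j\vh_l\}$ --- an assumption slightly stronger than (\ref{Eq:bound2}) alone --- supplies.
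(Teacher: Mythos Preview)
Your proposal is correct and follows essentially the same architecture as the paper's (sketched) proof: the paper's two ``key steps'' are precisely your identification $\mP_j\vh_l\to\vh_{li}$ (the paper obtains this from Lemma~3 of \cite{yin:13} on asymptotic steering-vector orthogonality rather than the Toeplitz eigenvector picture, but these are equivalent here) and the use of condition~(\ref{Eq:bound2}) to feed Lemma~\ref{lemmaQuadratic} and get $\tfrac1M\underline{\vh}_j^H\underline{\vh}_l\to0$. Your observation that $\mP_j\vh_j=\vh_j$ holds exactly (so no analogue of ${\bf\Xi}_j'$ is needed) and your explicit spectral-dominance argument are details the paper omits but that fit its outline.
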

\begin{proof}
\quad \emph{Proof:}
Due to lack of space, we skip the complete proof and only give two key steps below.
By applying the asymptotic orthogonality between two steering vectors which are associated with different AoAs ( Lemma 3 in \cite{yin:13}), we may readily obtain
\begin{align}
\mathop {\lim }\limits_{M \to \infty } \frac{1}{\sqrt{M}} \mV_{j} \mV_{j}^H {{\mathbf{h}}_{j}^{(j)}} &= \frac{1}{\sqrt{M}} {{\mathbf{h}}_{j}^{(j)}} \\
\mathop {\lim }\limits_{M \to \infty } \frac{1}{\sqrt{M}} \mV_{j} \mV_{j}^H {{\mathbf{h}}_{l}^{(j)}} &= \frac{1}{\sqrt{M}} {{\mathbf{h}}_{l\text{i}}^{(j)}}, \forall l \neq j,
\end{align}
which means the multipath components of interference that fall outside $\Phi_d$ disappear asymptotically after the projection by $\mV_{j} \mV_{j}^H$.
Then, equation (\ref{Eq:bound2}) ensures that
\begin{align}\label{Eq:hjhl=0}
\mathop {\lim }\limits_{M \to \infty } \frac{1}{M} {{\underline{\mathbf{h}}}_{j}^{(j)H}} {{\underline{\mathbf{h}}}_{l}^{(j)}} = 0, l \neq j,
\end{align}
where
\begin{align}
{{\underline{\mathbf{h}}}_{l}^{(j)}} \triangleq \mV_{j} \mV_{j}^H {{\mathbf{h}}_{l}^{(j)}}, l = 1, \ldots, L.
\end{align}

\end{proof}
Note that in Theorem \ref{theoAnglePower2} condition (\ref{Eq:bound2}) is less restrictive than the uniformly boundedness of the spectral norm of the channel covariance. %Moreover, condition (\ref{Eq:bound2}) is in general true for a ULA.
In the special case of zero angular spread, the rank of channel covariance becomes one. Denote the deterministic AoA from the user in cell $l$ to base station $j$ as $\overline{\theta}_{l}^{(j)}$. We can easily see that the channel estimation error of (\ref{Eq:subspace+blind}) vanishes completely as $M, C \rightarrow \infty$ as long as
\begin{equation}
\forall l \neq j, \overline{\theta}_{l}^{(j)} \neq \overline{\theta}_{j}^{(j)},
\end{equation}
which occurs with probability one.

When channel covariance is not available, we can still benefit from the subspace projection method by approximating $\mV_{j}$ with a subset of discrete Fourier transform (DFT) basis as shown in \cite{yin:15a}. This DFT basis can be chosen based on a small number of channel observations. The generalization to multi-user case can be done by introducing the ZF filter (\ref{Eq:Tjk}) as in section \ref{subsec:multiuser}. Due to lack of space, we skip the details.

\subsection{MMSE + amplitude based projection}
Another alternative is to directly project the MMSE estimate onto the subspace of $\mE^{(j)}$ obtained by EVD of $\mW^{(j)} \mW^{(j)H}/C$ as in section \ref{sec:blind}.
The estimator for the multi-user channel $\mH_j^{(j)}$ is given by
\begin{equation}\label{Eq:MMSE+Blind}
\widehat{{{\mH}}}_j^{(j){\text{MA}}} = \overline{\mE}^{(j)} \overline{\mE}^{(j)H} \overline{\mR}^{(j)}_j  \left( \tau (\sum_{l=1}^{L} \overline{\mR}^{(j)}_l) + \sigma_n^2 \mI_{K M} \right)^{-1} \overline{\mS}^H \vy^{(j)},
\end{equation}
where
\begin{align}
\overline{\mE}^{(j)} \triangleq  \mI_K  \otimes {\mE}^{(j)},
\end{align}
\begin{equation}
\begin{array}{*{20}{l}}
  {\mathbf{\overline S}} &\triangleq \mS^T \otimes \mI_M = \left[ {\begin{array}{*{20}{c}}
  {{{\mathbf{s}}_1} \otimes {{\mathbf{I}}_M}}& \cdots &{{{\mathbf{s}}_K} \otimes {{\mathbf{I}}_M}}
\end{array}} \right],
\end{array}
\end{equation}
and
\begin{equation}
{\overline{\mathbf{R}}_{l}^{(j)}} = {\mathop{\rm diag}\nolimits} \{ {{\mathbf{R}}^{(j)}_{l1},...,{\mathbf{R}}^{(j)}_{lK}}\}, l = 1, \ldots, L.
\end{equation}
The superscript ``MA" denotes MMSE + amplitude based projection.
It is worth noting that both the amplitude-based projection and angular-based projection require a large number of antennas to achieve complete decontamination. In contrast, the MMSE estimator is efficient with very small number of antennas. As $M$ grows, MMSE estimator starts to reduce interference earlier than the previously proposed methods, as will be shown by simulations in Section \ref{sec:numericalResult}.
%In the case of overlapping angular support, this approach eliminates a large amount of interference and noise, as long as there exists a non-negligible power gap between signal of interest and interference. On the other hand, when the power gap is too small (e.g., for the cell-edge users) while the angular supports are disjoint, this scheme benefits from the angular domain projection.
However, unlike the previously proposed schemes, this ``MA" estimator cannot achieve complete decontamination when the interference channel is overlapping with desired channel in both angular and power domains.

\section{Numerical Results}\label{sec:numericalResult}
This section contains numerical results of our different channel estimation schemes compared with prior methods. In the simulation, we have multiple hexagonally shaped adjacent cells in the network. The radius of each cell is 1000 meters. Each base station has $M$ antennas, which forms a ULA, with half wavelength antenna spacing. The length of pilot sequence is $\tau = 10$.

Two performance metrics are considered. The first is the normalized channel estimation error
\begin{equation}\label{Eq:err}
% \epsilon \triangleq 10{\log _{10}}\left( {\frac{{\sum\limits_{l = 1}^L {\left\| {{{\widehat {\mathbf{h}}}_{l}} - {{\mathbf{h}}_{l}}} \right\|^2} }}{{\sum\limits_{l = 1}^L {\left\| {{{\mathbf{h}}_{l}}} \right\|^2} }}} \right),
\epsilon \triangleq  {\frac{1}{KL}}\sum\limits_{j = 1}^L \sum\limits_{k = 1}^K \left({\frac{\left\| {{{\widehat{{\mathbf{h}}}}_{jk}^{(j)}} - {{\mathbf{h}}}_{jk}^{(j)}} \right\|_2^2} {\left\| {{\mathbf{h}}}_{jk}^{(j)} \right\|_2^2 }}\right).
\end{equation}
%where ${{\widehat{\widetilde{\mathbf{h}}}}_{jk}^{(j)}}$ represents the estimate of the equivalent uplink channel vector.
The estimation errors in the plots are obtained by Monte Carlo simulations and displayed in dB scale.

The second metric is the uplink per-cell rate when MRC receiver (based on the obtained channel estimate) is used at the base station side.

In all simulations presented in this section, we assume that the channel covariance matrix is estimated using 1000 exact channel realizations.
The multipath angle of arrival of any channel (including the interference channel) follows a uniform distribution centered at the direction corresponding to line-of-sight (LoS). The number of multipath is $P=50$.
According to the coherence time model in \cite{rappaport1996}, for a mobile user moving at a vehicular speed of 70 km/h in an environment of 2.6 GHz carrier frequency and $5\mu$s high delay spread (corresponding to an excess distance of 1.5 km), the channel can be assumed coherent over 500 transmitted symbols. Thus, we will let $C=500$ in simulations, although larger coherence time can be expected in practice for a user with lower mobility.

%Note that the amplitude-based projection method shows
Note that in all simulations, the amplitude-based projection and MMSE + amplitude based projection follow the enhanced eigenvector selection strategy shown in Remark \ref{remark1}
with the design parameter $\mu = 0.2$.
%simulations  selects dynamically the number of eigenvectors, according to the strategy given in section \ref{sec:robustAmplitude} with the design parameter $\mu = 0.2$.

We first illustrate Theorem \ref{theoStatis} in Fig. \ref{fig:msePartialOverlapping}. Suppose we have a two-cell network, with each cell having one user. In order to make the interference overlapping in power domain with the desired signal, we set the path loss exponent $\gamma = 0$. The power of the interference channel has equal probability to be higher or lower than the power of the desired channel. The user in each cell is deliberately put in a symmetrical position such that the multipath angular supports of the interference and the desired channel are half overlapping with each other.

\begin{figure}[h]
  \centering
  \includegraphics[width=3.2in]{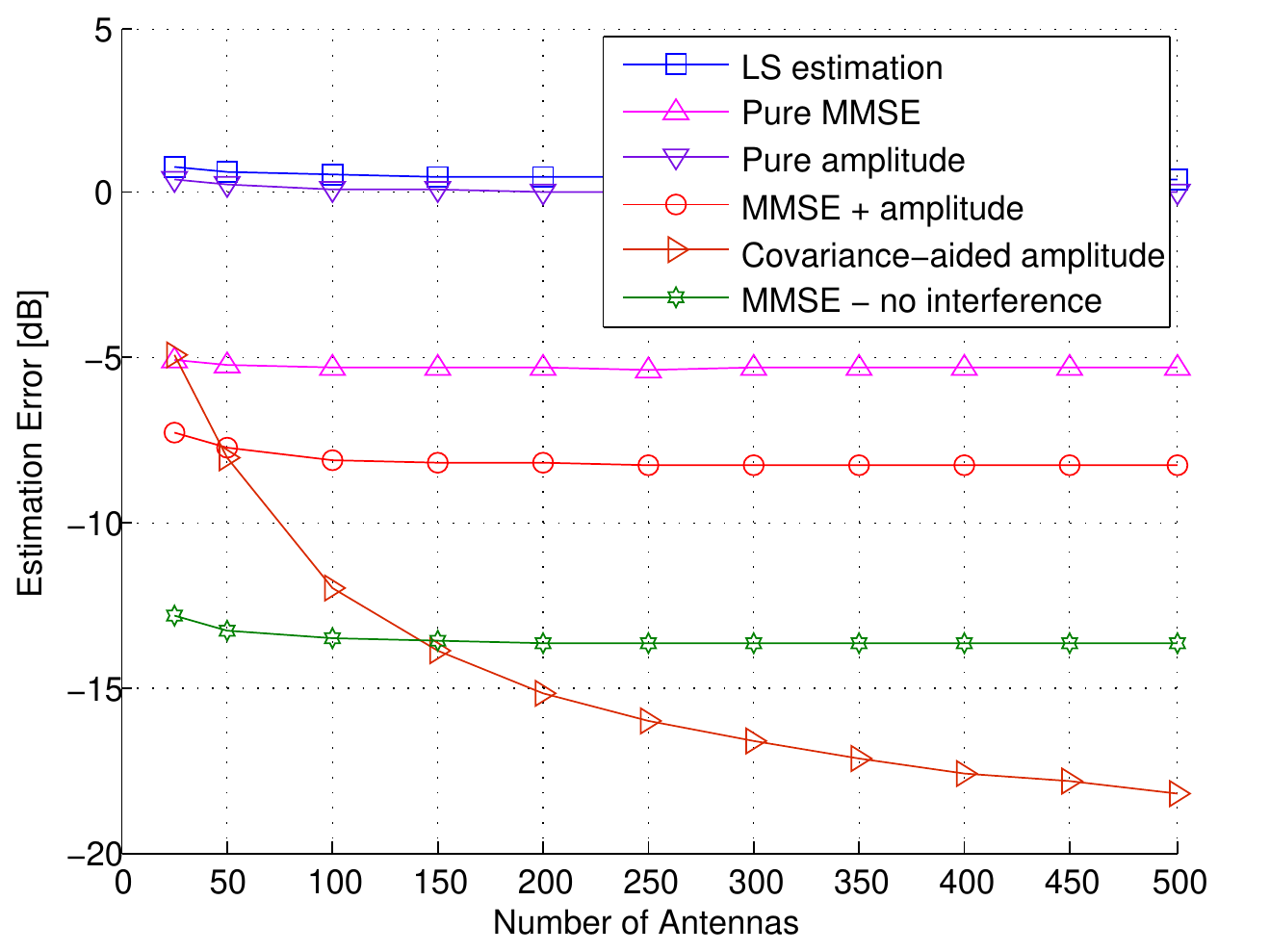}\\
  \caption{Estimation performance vs. M, 2-cell network, 1 user per cell, path loss exponent $\gamma = 0$, partially overlapping angular support, AoA spread 60 degrees, SNR = 0 dB.} \label{fig:msePartialOverlapping}
\end{figure}

In the figure, ``LS estimation" and ``Pure MMSE" denote the system performances when an LS estimator and an MMSE estimator (\ref{Eq:MMSE_2}) are used respectively. ``Pure amplitude" denotes the case when we apply the generalized amplitude based projection method only. ``MMSE + amplitude" represents the proposed estimator (\ref{Eq:MMSE+Blind}). ``Covariance-aided amplitude" denotes the proposed covariance-aided amplitude based projection method (\ref{Eq:CAProj}).
The curve ``MMSE - no interference" shows the estimation error of an MMSE estimator in an interference-free scenario. As can be seen from Fig. \ref{fig:msePartialOverlapping}, due to the overlapping interference in both angle and power domains, the performance of all estimators saturate quickly with the number of antennas, except the proposed covariance-aided amplitude based projection method, which eventually outperforms interference-free MMSE estimation.\footnote{The reason is that the performance of the interference-free MMSE estimation has a non-vanishing lower bound due to white Gaussian noise. On the contrary, our proposed covariance-aided amplitude based projection method eliminates the effects of noise and interference asymptotically.}
%Fig. \ref{fig:sumratePartialOverlapping} shows the corresponding uplink per-cell rate of Fig. \ref{fig:msePartialOverlapping}. Each base station performs MRC based on the acquired channel estimate. As well expected, the proposed covariance-aided amplitude projection method works alone, thus proving its robustness in such a dual-domain overlapping scenario.

\begin{figure}[h]
  \centering
  \includegraphics[width=3.2in]{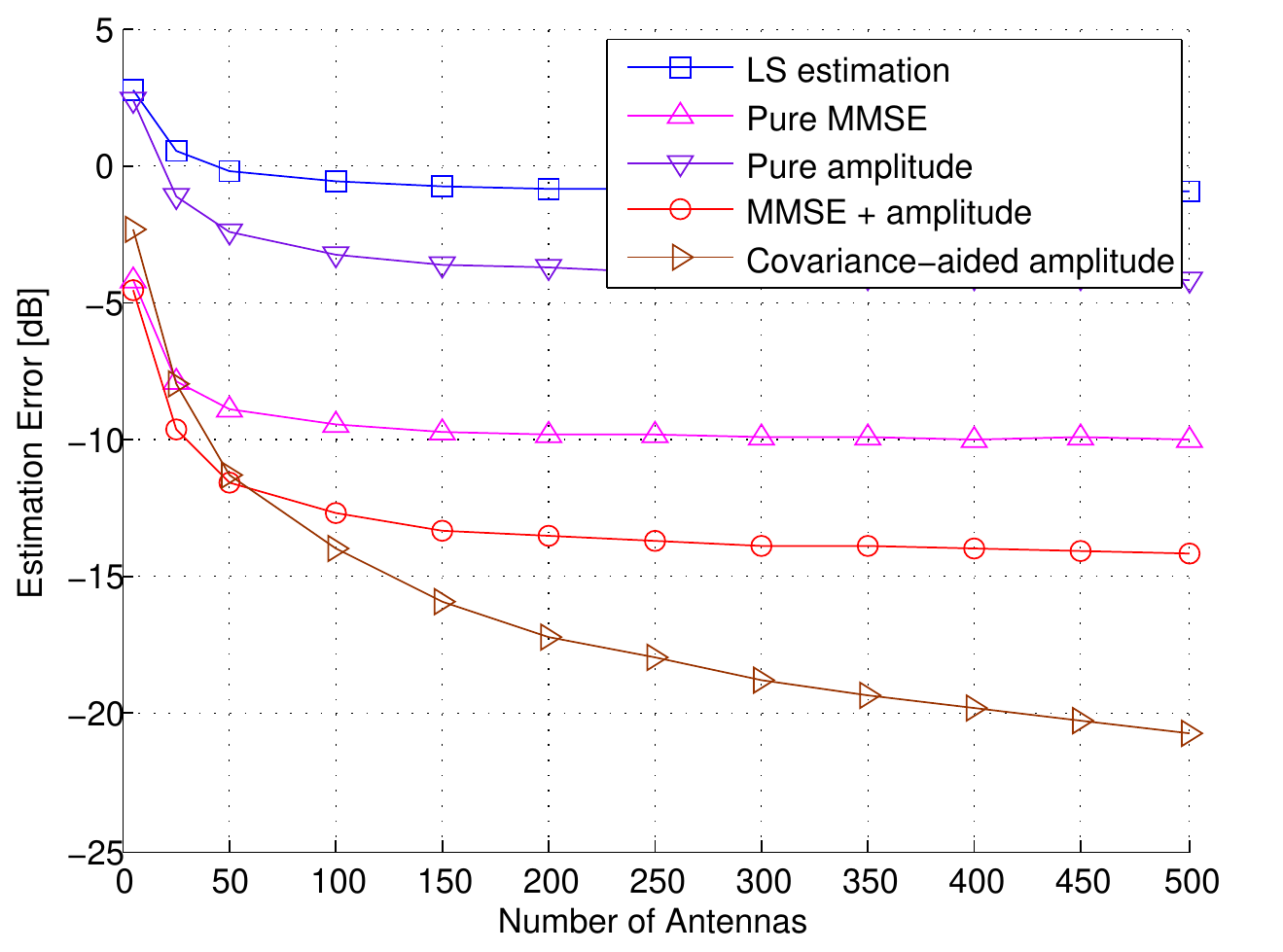}\\
  \caption{Estimation performance vs. M, 7-cell network, one user per cell, AoA spread 30 degrees, path loss exponent $\gamma = 2$, cell-edge SNR = 0 dB.} \label{fig:MSE.vs.M_7cell_1user}
\end{figure}

\begin{figure}[h]
  \centering
  \includegraphics[width=3.2in]{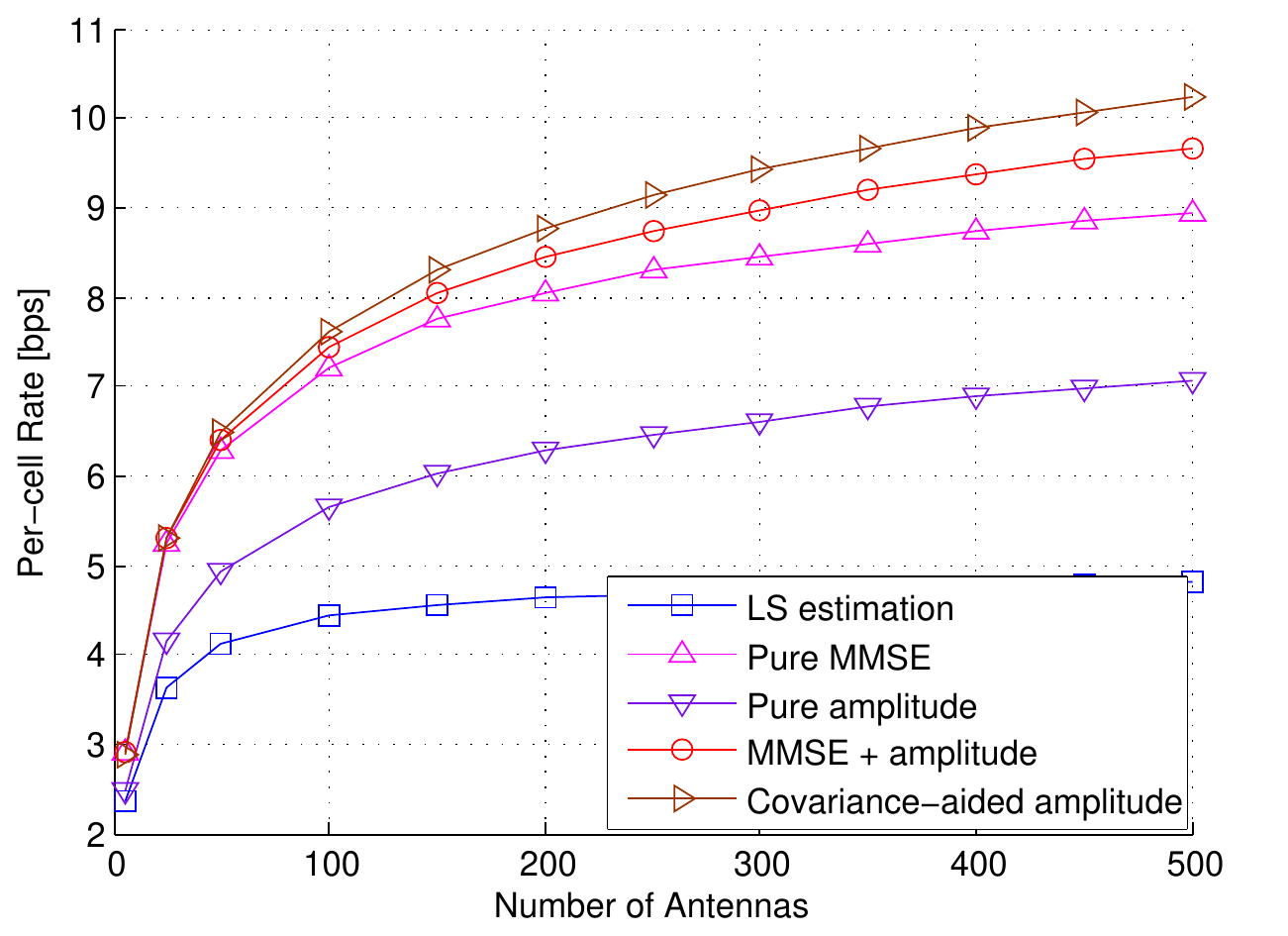}\\
  \caption{Uplink per-cell rate vs. M, 7-cell network, one user per cell, AoA spread 30 degrees, path loss exponent $\gamma = 2$, cell-edge SNR = 0 dB.} \label{fig:sumrate.vs.M_7cell_1user}
\end{figure}

In Fig. \ref{fig:MSE.vs.M_7cell_1user} and Fig. \ref{fig:sumrate.vs.M_7cell_1user}, we show the performance of estimation error and the corresponding uplink per-cell rate for a 7-cell network, with single user per cell. The users are assumed to be distributed randomly and uniformly within their own cells excluding a central disc with radius 100 meters. The angular spread of the user channel (including interference channel) is 30 degrees. The path loss exponent is now $\gamma = 2$. As we may observe, the traditional LS estimator suffers from severe pilot contamination.
%The pure blind scheme keeps improving when the number of antennas increases, and eventually outperforms the pure MMSE method, since it always benefits from a larger number of antennas.
The pure amplitude based method and the pure MMSE method alleviate the pilot interference, yet saturate with the number of antennas. These saturation effects come from the overlapping of the interference and the desired channels in power and angular domains respectively. The ``MMSE + amplitude" approach outperforms these two known methods as it discriminates against interference in both amplitude and angular domains. However this scheme cannot cope with the case of overlapping in both domains.
%The low-complexity method ``Amplitude + DFT" keeps improving with the number of antennas, due to the diminishing leakage effect of DFT projection \cite{sayeed2002}. Eventually it outperforms pure MMSE in massive MIMO regime in terms of channel estimation error.
Owing to its robustness, the covariance-aided amplitude projection method outperforms the rest in terms of both estimation error and uplink per-cell rate.
%However this scheme cannot approach the combination of the two methods, namely ``Angular + amplitude", as it does not exploit the angular information of the user channels. It is interesting to note that

We now turn our attention to multi-cell multi-user scenario. Fig. \ref{fig:MSE.vs.M_7cell_4user} and Fig. \ref{fig:sumrate.vs.M_7cell_4user} show the channel estimation performance and the corresponding uplink per-cell rate for a 7-cell network with each cell having 4 users. In these two figures, we add the curve of subspace and amplitude based projection, which is denoted in the figures as ``Subspace + amplitude".
The other parameters remain unchanged compared with those in Fig. \ref{fig:MSE.vs.M_7cell_1user} and Fig. \ref{fig:sumrate.vs.M_7cell_1user}. We can notice that in Fig. \ref{fig:MSE.vs.M_7cell_4user} the covariance-aided amplitude projection method has some performance loss with respect to the low-complexity MMSE + amplitude method and the MMSE method when the number of antennas is small. It is due to the following two facts: 1) when $M$ is small, it is well known that MMSE works well, but not the amplitude based methods, and 2) with small $M$, the asymptotical orthogonality of channels of different users is not fully exhibited, and consequently a small amount of signal of interest is removed by the ZF filter $\mT_{kj}$, along with intra-cell interference.
%It is due to $\mT_{kj}$ in (\ref{Eq:Tjk}), which is used for canceling intra-cell interference. With small number of BS antennas, the asymptotical orthogonality of channels of different users is not fully exhibited. When we remove the intra-cell interference channel by $\mT_{kj}$, we also inevitably remove partially the signal of interest, which is the reason of the performance loss with respect to the single user setting.
However it is not disturbing in the sense that 1) as the number of antennas grows, the covariance-aided amplitude projection method quickly outperforms the other methods; and 2) The per-cell rate of this proposed method is still good even with moderate number of antennas, e.g., $M>25$.
It is also interesting to note that the low-complexity alternative scheme, subspace and amplitude based projection method, has some minor performance loss, yet keeps approximately the same slope as the covariance-aided amplitude projection.

\begin{figure}[h]
  \centering
  \includegraphics[width=3.2in]{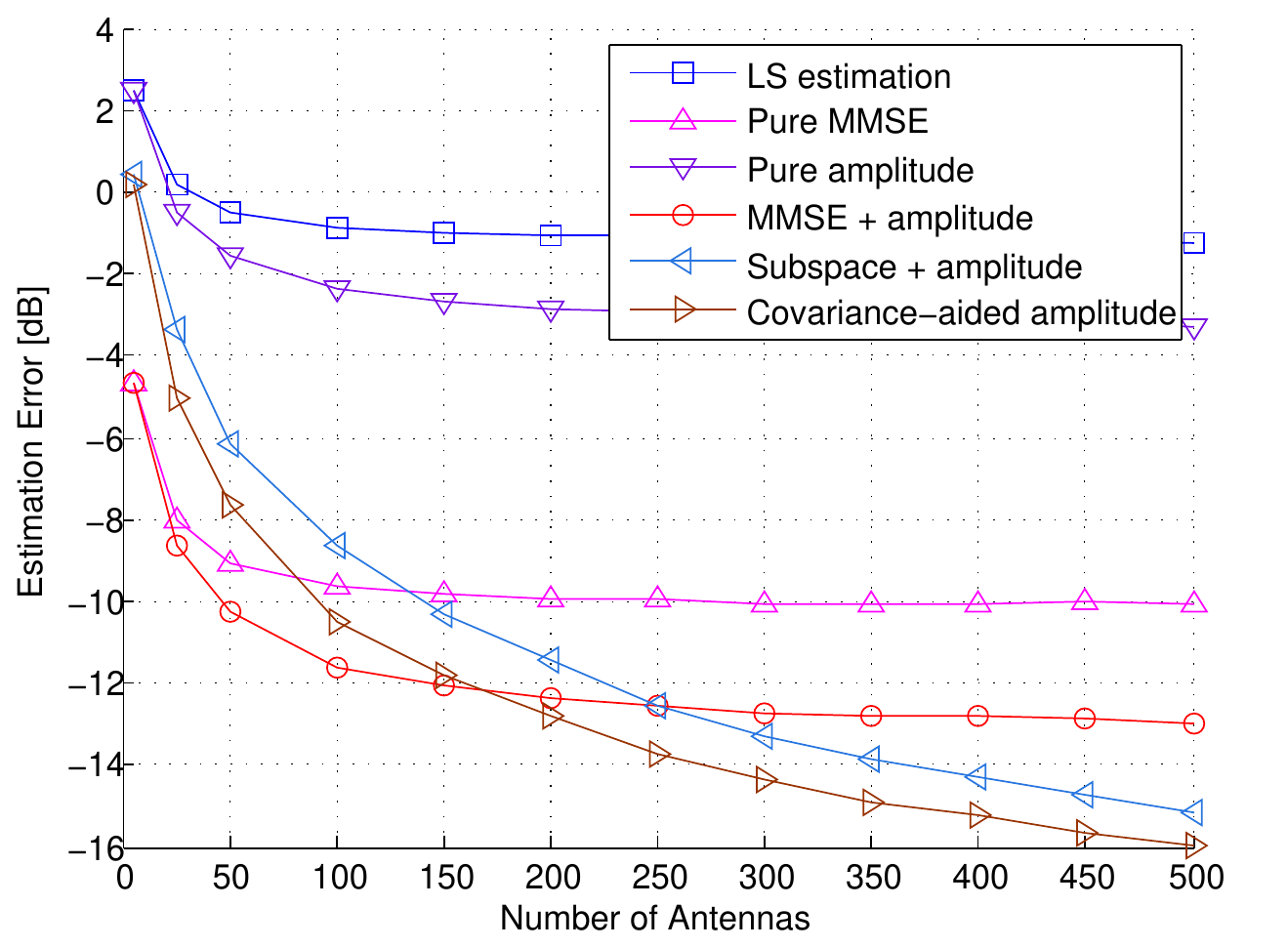}\\
  \caption{Estimation performance vs. M, 7-cell network, 4 users per cell, AoA spread 30 degrees, path loss exponent $\gamma = 2$, cell-edge SNR = 0 dB.} \label{fig:MSE.vs.M_7cell_4user}
\end{figure}

\begin{figure}[h]
  \centering
  \includegraphics[width=3.2in]{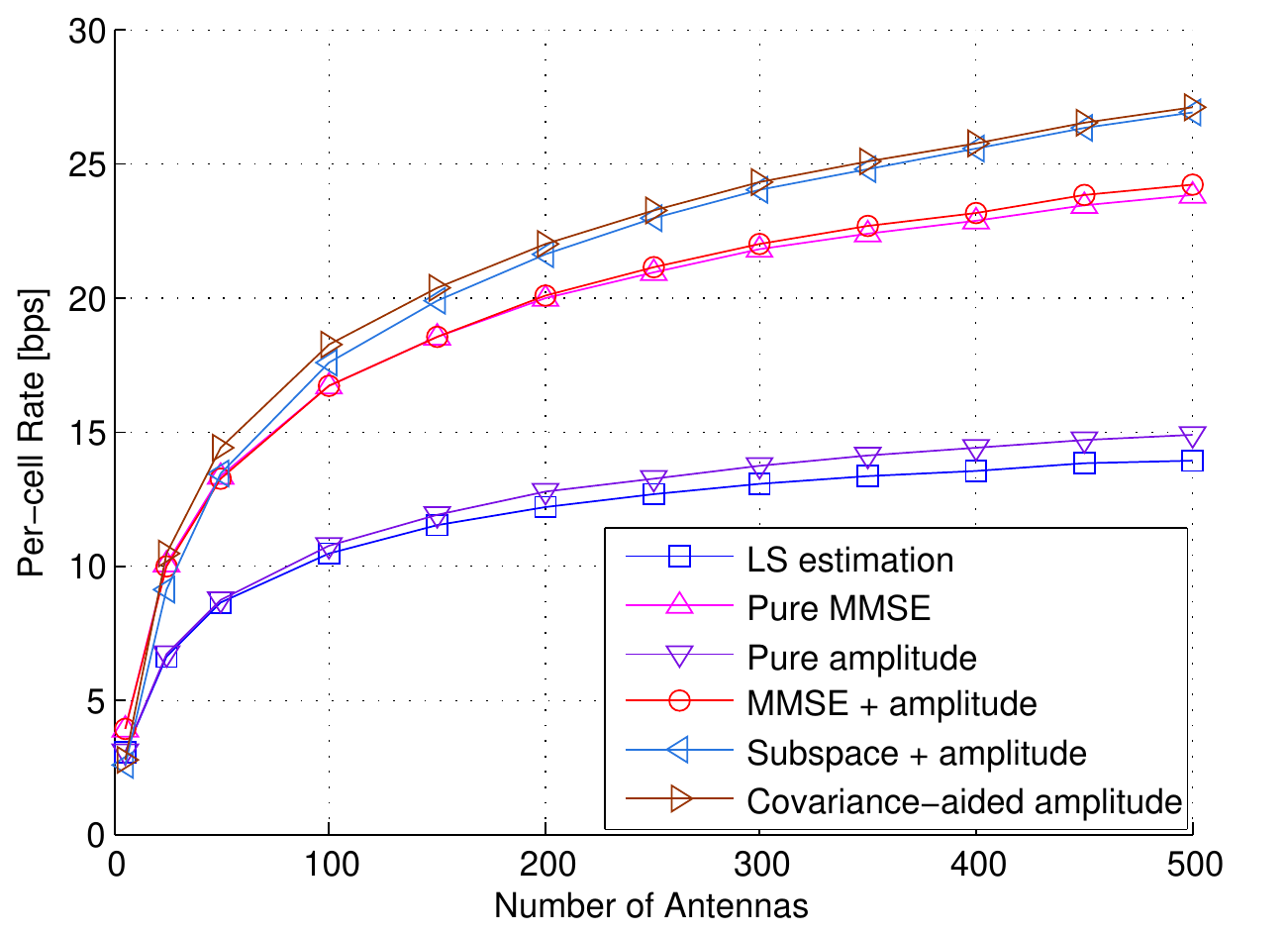}\\
  \caption{Uplink per-cell rate vs. M, 7-cell network, 4 users per cell, AoA spread 30 degrees, path loss exponent $\gamma = 2$, cell-edge SNR = 0 dB.} \label{fig:sumrate.vs.M_7cell_4user}
\end{figure}

\section{Conclusions}\label{conclusion}
In this paper we proposed a series of robust channel estimation algorithms exploiting path diversity in both angle and amplitude domains.
%Both methods improve on past algorithms in a wide range of system and topology scenarios.
The first method called ``covariance-aided amplitude based projection" is robust even when the desired channel and the interference channels overlap in multipath AoA and are not separable just in terms of power. Two low-complexity alternative schemes were proposed, namely ``subspace and amplitude based projection" and ``MMSE + amplitude based projection". Asymptotic analysis shows the condition under which the channel estimation error converges to zero.

\appendix
\subsection{Proof of Proposition \ref{propBoundedness}:}\label{proof:propBoundedness}
\begin{proof}
Denote the associated path loss as ${\beta}$. The covariance ${\mathbf{R}}$ is a Toeplitz matrix, with its $mn$-th entry given by
\begin{align}\label{Eq:CorrUni}
{{\mathbf{R}}}(m,n) &= {\beta}\int_{0}^{{\pi }} p(\theta) {e^{j2\pi \frac{D}{\lambda }(n - m)\cos (\theta )}} d\theta \\
& = {\beta}\int_{-1}^{{1}} p\left(\arccos(x)\right) {e^{j2\pi \frac{D}{\lambda }(n - m)x}} \frac{1}{\sqrt{1-x^2}} d x \nonumber \\
& = \frac{\beta \lambda}{2 \pi D} \int_{-{\frac{2 \pi D}{\lambda}}}^{{\frac{2 \pi D}{\lambda}}} \frac{p\left(\arccos(\frac{\lambda x }{2 \pi D})\right)}{\sqrt{1- \left( \frac{\lambda x }{2 \pi D}\right)^2}} {e^{j (n - m)x}} d x, \nonumber \\
& = \frac{1}{2 \pi} \int_{-{\frac{2 \pi D}{\lambda}}}^{{\frac{2 \pi D}{\lambda}}} f(x) {e^{j (n - m)x}} d x,
\end{align}
where
\begin{align}
f(x) \triangleq \frac{\beta \lambda}{D} \frac{p\left(\arccos(\frac{\lambda x }{2 \pi D})\right)}{\sqrt{1- \left( \frac{\lambda x }{2 \pi D}\right)^2}}.
\end{align}
Since $0, \pi \notin \Phi$, or in other words, $p(0) = p(\pi) = 0$, and that $p(\theta) < \infty, \forall \theta \in \Phi$, it follows that $f(x)$ is uniformly bounded:
\begin{equation}
f(x) < + \infty,  -{\frac{2 \pi D}{\lambda}} \leq x \leq {\frac{2 \pi D}{\lambda}}.
\end{equation}
Thus, the Toeplitz matrix ${\mathbf{R}}$ is related to the real integrable and uniformly bounded generating function $f(x)$, with its entries being Fourier coefficients of $f(x)$. We now resort to the known result on the spectrum of the $n \times n$ Toeplitz matrices $\mT_n(f)$ defined by the generating function $f(x)$. Denote by $\text{ess inf}$ and $\text{ess sup}$ the essential minimum and the essential maximum of $f$, i.e., the infimum and the supremum of $f$ up to within a set of measure zero. Let $m_f \triangleq \text{ess inf} f$ and $M_f \triangleq \text{ess sup} f$.
\begin{theorem}\label{theoToepSpectrum}
\cite{grenander1985} If $\lambda_0^{(n)} \leq \lambda_1^{(n)} \leq \cdots \leq \lambda_{n-1}^{(n-1)}$ are the eigenvalues of $\mT_n(f)$, then, the spectrum of $\mT_n(f)$ is contained in $(m_f, M_f)$; moreover
$\mathop {\lim }\limits_{n \to \infty } \lambda_0^{(n)} = m_f$ and $\mathop {\lim }\limits_{n \to \infty } \lambda_{n-1}^{(n-1)} = M_f$.
\end{theorem}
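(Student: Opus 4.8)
The plan is to establish Theorem~\ref{theoToepSpectrum} through the variational (Rayleigh quotient) characterization of the eigenvalues of the Hermitian Toeplitz matrix $\mT_n(f)$, reducing every claim to one elementary estimate on its quadratic form. First I would record that, since $f$ is real, $\mT_n(f)$ is Hermitian and its $(p,q)$ entry is the Fourier coefficient $\hat f(p-q)$; hence for any $\vv=(v_0,\dots,v_{n-1})^{T}$ one has the fundamental identity
\begin{equation}
\vv^{H}\mT_n(f)\vv = \frac{1}{2\pi}\int f(x)\,\bigl|P_{\vv}(x)\bigr|^{2}\,dx,\qquad P_{\vv}(x)\triangleq\sum_{k=0}^{n-1}v_k e^{ikx},
\end{equation}
together with the Parseval relation $\|\vv\|_2^{2}=\frac{1}{2\pi}\int|P_{\vv}(x)|^{2}\,dx$. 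These two facts drive the whole argument.

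Second, the containment $\operatorname{spec}(\mT_n(f))\subseteq(m_f,M_f)$ is immediate: since $m_f\le f\le M_f$ almost everywhere and $|P_{\vv}|^2\ge 0$, the identity gives $m_f\|\vv\|_2^2\le \vv^H\mT_n(f)\vv\le M_f\|\vv\|_2^2$, so every Rayleigh quotient, and hence every eigenvalue, lies in $[m_f,M_f]$. Strictness (the open interval) follows because equality $\vv^H\mT_n(f)\vv=M_f\|\vv\|_2^2$ forces $\int(M_f-f)|P_{\vv}|^2=0$; as $M_f-f\ge 0$ and $|P_{\vv}|^2$ vanishes only on a finite set, this would require $f=M_f$ a.e., which is excluded unless $f$ is essentially constant. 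The lower end is treated identically with $m_f$.

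Third, for the limits I would combine monotonicity with a density argument. By Cauchy interlacing, $\mT_n(f)$ being a principal submatrix of $\mT_{n+1}(f)$ makes $\lambda_{n-1}^{(n-1)}=\lambda_{\max}(\mT_n(f))$ non-decreasing in $n$ and $\lambda_0^{(n)}=\lambda_{\min}(\mT_n(f))$ non-increasing; both are bounded by Step~2, hence converge to limits $L_{\max}\le M_f$ and $L_{\min}\ge m_f$. To show $L_{\max}=M_f$ I would argue by contradiction: if $L_{\max}<M_f$, then $\int(L_{\max}-f)|P_{\vv}|^2\ge 0$ for every trigonometric polynomial $P_{\vv}$ of every degree. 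Choosing $\vv$ so that $|P_{\vv}|^2$ concentrates its mass on the positive-measure set $E_\epsilon=\{x:f(x)>M_f-\epsilon\}$ renders the left-hand side strictly negative for small $\epsilon$, the desired contradiction; the symmetric construction against $m_f$ yields $L_{\min}=m_f$.

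The main obstacle is exactly this concentration step: exhibiting, for each $\epsilon>0$ and all large $n$, a unit vector $\vv\in\CC^{n}$ whose spectral mass $|P_{\vv}(x)|^2\,dx$ is essentially carried by $E_\epsilon$. The cleanest route is to pass to the limiting Hardy-space picture, where $\bigcup_n\operatorname{span}\{e^{ikx}\}_{0\le k<n}$ is dense in $H^2$ and $\langle g,T_f g\rangle=\frac{1}{2\pi}\int f|g|^2$, and to build $g$ as the outer function with modulus $\mathbf 1_{E_\epsilon}+\delta\,\mathbf 1_{E_\epsilon^{c}}$, whose Rayleigh quotient tends to $|E_\epsilon|^{-1}\int_{E_\epsilon}f\ge M_f-\epsilon$ as $\delta\downarrow 0$; truncating $g$ to degree $n-1$ and letting $n\to\infty$ transfers the bound to the matrices. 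All remaining manipulations (the Fourier-coefficient identity, Parseval, interlacing) are routine, so the analytic heart of the proof is this construction of near-extremal test functions.
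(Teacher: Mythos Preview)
Your argument is correct and is essentially the classical Grenander--Szeg\H{o} proof, but note that the paper does not actually prove Theorem~\ref{theoToepSpectrum}: it is quoted verbatim from \cite{grenander1985} and used as a black box inside the proof of Proposition~\ref{propBoundedness}. So there is no ``paper's own proof'' to compare against; you have supplied a self-contained justification where the authors simply invoke the literature.

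One remark on your construction of near-extremal test vectors. The detour through outer functions in $H^2$ is valid but heavier than necessary. Because $|e^{imx}P(x)|^{2}=|P(x)|^{2}$, any trigonometric polynomial $P(x)=\sum_{k=-m}^{m}c_k e^{ikx}$ can be shifted to one with nonnegative frequencies without changing the Rayleigh quotient; hence the admissible test functions are, up to modulus, \emph{all} trigonometric polynomials, and ordinary $L^2$ density of trigonometric polynomials suffices. Concretely, take $g=\mathbf{1}_{E_\epsilon}$, approximate it in $L^2$ by its Fej\'er means $g_n$, and use boundedness of $f$ to pass $\int f|g_n|^2\big/\int|g_n|^2$ to the limit $|E_\epsilon|^{-1}\int_{E_\epsilon}f\ge M_f-\epsilon$. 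This avoids the $\log w\in L^1$ hypothesis and the Hardy-space truncation step, and is closer to how the result is usually presented.
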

By invoking Theorem \ref{theoToepSpectrum}, we obtain that $\mathop {\lim }\limits_{M \to \infty } \left\|{{\bf{R}}}\right\|_2 = M_f < \infty$. In addition, for any finite $M$, the inequality $\left\|{{\bf{R}}}\right\|_2 < \infty$ always holds true. This concludes the proof.
\end{proof}

\subsection{Proof of Lemma \ref{lemmaSpectralNormPerturbation}:}\label{proof:lemmaSpectralNormPerturbation}
\begin{proof}
Since $\mR_j$ and ${\left( \sum_{l = 1}^L {\bf{R}}_{l} + \sigma _n^2{{\bf{I}}_M} \right)^{ - 1}}$ are both positive semi-definite (PSD) Hermitian matrices, we can directly apply the inequalities of \cite{marshall2010} on the eigenvalues of the product of two PSD Hermitian matrices
\begin{align}
\left\| {\bf{\Xi}}_j \right\|_2 &\leq  \left\| {\left( \sum\limits_{l = 1}^L {\bf{R}}_{l} + \sigma _n^2{{\bf{I}}_M} \right)^{ - 1}} \right\|_2 \left\| \mR_j \right\|_2
< \frac{\zeta }{\sigma _n^2}.
\end{align}
It is straightforward to show that
\begin{equation}\label{Eq:boundXiXi}
\left\| {\bf{\Xi}}_j{\bf{\Xi}}_j^H \right\|_2 = \left\| {\bf{\Xi}}_j \right\|_2^2 < \frac{\zeta^2}{\sigma _n^4},
\end{equation}
which indicates that the spectral norm of ${\bf{\Xi}}_j{\bf{\Xi}}_j^H$ is also uniformly bounded. This proves Lemma \ref{lemmaSpectralNormPerturbation}.
\end{proof}

\subsection{Proof of Lemma \ref{lemmaInnerProd}:}\label{proof:lemmaInnerProd}
\begin{proof}
Using the spatial correlation model (\ref{Eq:spaWhi}), we may write
\begin{align}
\frac{1}{M} {\underline{\vh}_j^H} \underline{\vh}_l &= \frac{1}{M} {{\mathbf{h}}^H_{\text{W}}}_{j} {\mathbf{R}}_{j}^{\frac{1}{2}} {\bf{\Xi}}_j^H {\bf{\Xi}}_j {{\mR}}_{l}^{\frac{1}{2}} {{\vh}_{\text{W}}}_{l}.
\end{align}
By an abuse of notation, we now use the operator $\lambda_1\{ \cdot \}$ to represent the largest singular value of a matrix.
Appealing to the singular value inequalities in \cite{roger:94}, we can show that the maximum singular value of ${\mathbf{R}}_{j}^{\frac{1}{2}} {\bf{\Xi}}_j^H {\bf{\Xi}}_j {{\mR}}_{l}^{\frac{1}{2}} $ yields
\begin{align}
\lambda_1\{ {\mathbf{R}}_{j}^{\frac{1}{2}} {\bf{\Xi}}_j^H {\bf{\Xi}}_j {{\mR}}_{l}^{\frac{1}{2}}  \} &\leq  \lambda_1\{ {\mathbf{R}}_{j}^{\frac{1}{2}} \} \lambda_1\{ {\bf{\Xi}}_j^H {\bf{\Xi}}_j {{\mR}}_{l}^{\frac{1}{2}}\} \\
& < \zeta^{\frac{1}{2}} \lambda_1\{ {\bf{\Xi}}_j^H {\bf{\Xi}}_j \} \lambda_1\{{{\mR}}_{l}^{\frac{1}{2}}\} \\
&< \frac{\zeta^3}{\sigma _n^4},
\end{align}
which means the spectral radius of the complex matrix ${\mathbf{R}}_{j}^{\frac{1}{2}} {\bf{\Xi}}_j^H {\bf{\Xi}}_j {{\mR}}_{l}^{\frac{1}{2}}$ is uniformly bounded for any $M$. Thus, according to Lemma \ref{lemmaQuadratic}, $\frac{1}{M} {\underline{\vh}_j^H} \underline{\vh}_l, \forall l \neq j,$ converges almost surely to zero. Thus (\ref{Eq:hjhl_underline}) holds true. In a similar way, we can prove (\ref{Eq:hlhl_underline}).
This concludes the proof of Lemma \ref{lemmaInnerProd}.
\end{proof}

\subsection{Proof of Lemma \ref{lemmaEigenvector}:}\label{proof:lemmaEigenvector}
\begin{proof}
Define
\begin{align}\label{Eq:Gamma}
\Gamma &\triangleq \mathop {\lim }\limits_{C \to \infty } \left({\frac{1}{C} { {\bf{\widetilde W}}_j{{\bf{\widetilde W}}_j^{H}} }} \right) \\
&=\underline{\vh}_j \underline{\vh}_j^H + \sum_{l \neq j}\underline{\vh}_l \underline{\vh}_l^H + \sigma_n^2 {\bf{\Xi}}_j{\bf{\Xi}}_j^H.
\end{align}
In this proof, we first consider the noise free scenario and let
\begin{equation}
\Gamma_\text{nf} = \underline{\vh}_j \underline{\vh}_j^H + \sum_{l \neq j}\underline{\vh}_l \underline{\vh}_l^H,
\end{equation}
where the subscript ``nf" denotes noise free.
We can then write
\begin{align}\label{Eq:noNoise}
& \quad  \mathop {\lim }\limits_{M \to \infty } \left\| \frac{\Gamma_\text{nf}}{M} \frac{\underline{\vh}_j}{\left\|\underline{\vh}_j\right\|_2} - \alpha_j \frac{\underline{\vh}_j}{\left\|\underline{\vh}_j\right\|_2} \right\|_2^2 \\
& = \mathop {\lim }\limits_{M \to \infty } ( \frac{\Gamma_\text{nf}}{M} \frac{\underline{\vh}_j}{\left\|\underline{\vh}_j\right\|_2} - \alpha_j \frac{\underline{\vh}_j}{\left\|\underline{\vh}_j\right\|_2} )^H ( \frac{\Gamma_\text{nf}}{M} \frac{\underline{\vh}_j}{\left\|\underline{\vh}_j\right\|_2} - \alpha_j \frac{\underline{\vh}_j}{\left\|\underline{\vh}_j\right\|_2} ) \nonumber \\
& = \mathop {\lim }\limits_{M \to \infty }  \frac{1}{M^2} \left\|\underline{\vh}_j\right\|_2^2 - \mathop {\lim }\limits_{M \to \infty }  \frac{2\alpha_j}{M} \left\|\underline{\vh}_j\right\|_2^2 + \alpha_j^2 \nonumber \\
& = \alpha_j^2 - 2\alpha_j^2 + \alpha_j^2 \nonumber \\
& = 0, \nonumber
\end{align}
which proves that when $M \rightarrow \infty$, an eigenvalue of the random matrix $\Gamma_\text{nf}/M$ converges to $\alpha_j$, with its corresponding eigenvector converging to ${\underline{\vh}_j}/{\left\|\underline{\vh}_j\right\|_2}$ up to a random phase.

Then we consider the Hermitian matrix $\sigma_n^2 {\bf{\Xi}}_j{\bf{\Xi}}_j^H$ as a perturbation on $\Gamma_\text{nf}/M$. %According to (\ref{Eq:boundXiXi}) in the proof of , the spectral norm of ${\bf{\Xi}}{\bf{\Xi}}^H$ is uniformly bounded for all $M \in \mathbb{Z}^+$.
Due to the Bauer-Fike Theorem \cite{bauer1960} on the perturbation of eigenvalues of Hermitian matrices, together with Lemma \ref{lemmaSpectralNormPerturbation}, we have for $1 \leq i \leq L$:
\begin{align}
&\mathop {\lim }\limits_{M \to \infty } \left| \lambda_i\left\{ \frac{\Gamma}{M} \right\} - \lambda_i\left\{ \frac{\Gamma_\text{nf}}{M} \right\}\right| \\
&\leq \mathop {\lim }\limits_{M \to \infty } \frac{\sigma_n^2}{M} \left\| {\bf{\Xi}}_j{\bf{\Xi}}_j^H \right\|_2  \\
& = 0.
\end{align}
The above result shows that the impact of the perturbation on the eigenvalues of $\Gamma_\text{nf}/M$ vanishes as $M \rightarrow \infty$. In other words, $\alpha_j$ is again an asymptotic eigenvalue of ${\Gamma}/{M}$. Now we verify that despite the perturbation, the eigenvector of ${\Gamma}/{M}$ corresponding to the asymptotic eigenvalue $\alpha_j$ also converges to ${\underline{\vh}_j}/{\left\|\underline{\vh}_j\right\|_2}$ up to a random phase. To prove this, it is sufficient to show that
%Due to the definition of
%To prove (\ref{Eq:lemmaEigenvector}), it is sufficient to show that
\begin{align}
& \quad  \mathop {\lim }\limits_{M \to \infty } \left\| \frac{\Gamma}{M} \frac{\underline{\vh}_j}{\left\|\underline{\vh}_j\right\|_2} - \alpha_j \frac{\underline{\vh}_j}{\left\|\underline{\vh}_j\right\|_2} \right\|_2 \\
& \leq \mathop {\lim }\limits_{M \to \infty } \left\| \frac{\Gamma_\text{nf}}{M} \frac{\underline{\vh}_j}{\left\|\underline{\vh}_j\right\|_2} - \alpha_j \frac{\underline{\vh}_j}{\left\|\underline{\vh}_j\right\|_2} \right\|_2 + \left\| \frac{\sigma_n^2 {\bf{\Xi}}_j{\bf{\Xi}}_j^H}{M} \frac{\underline{\vh}_j}{\left\|\underline{\vh}_j\right\|_2}\right\|_2 \nonumber \\
& \overset{{(a)}}{=} 0, \nonumber
\end{align}
where $(a)$ is due to the definition of the spectral norm
\begin{align}
\mathop {\lim }\limits_{M \to \infty } \left\| \frac{\sigma_n^2 {\bf{\Xi}}_j{\bf{\Xi}}_j^H}{M} \frac{\underline{\vh}_j}{\left\|\underline{\vh}_j\right\|_2}\right\|_2 = 0.
\end{align}
It follows that
\begin{align}
\mathop {\lim }\limits_{M,C \to \infty } \left\| \frac{ {\bf{\widetilde W}}_j{{\bf{\widetilde W}}_j^{H}} }{MC} \frac{\underline{\vh}_j}{\left\|\underline{\vh}_j\right\|_2} - \alpha_j \frac{\underline{\vh}_j}{\left\|\underline{\vh}_j\right\|_2} \right\|_2 &= 0,
\end{align}
which concludes the proof of Lemma \ref{lemmaEigenvector}.
\end{proof}

\subsection{Proof of Lemma \ref{lemmaEigenvectorOrig}:}\label{proof:lemmaEigenvectorOrig}
\begin{proof}
We can derive
\begin{align}\label{Eq:h_udl_converge2}
& \quad \mathop {\lim }\limits_{M,C \to \infty } {\left\| \frac{{\bf{\Xi}}_j' \underline{\vh}_j}{\left\| {\bf{\Xi}}_j' \underline{\vh}_j \right\|_2} - \frac{{\bf{\Xi}}_j' \widetilde{\vu}_{j1} e^{j\phi}}{\left\| {\bf{\Xi}}_j' \widetilde{\vu}_{j1} \right\|_2}  \right\|_2^2 } \\
& = \mathop {\lim }\limits_{M,C \to \infty } \left( \frac{{\bf{\Xi}}_j' \underline{\vh}_j}{\left\| {\bf{\Xi}}_j' \underline{\vh}_j \right\|_2} - \frac{{\bf{\Xi}}_j' \widetilde{\vu}_{j1} e^{j\phi}}{\left\| {\bf{\Xi}}_j' \widetilde{\vu}_{j1} \right\|_2}  \right)^H  \cdot \nonumber \\
& \qquad  \qquad   \left( \frac{{\bf{\Xi}}_j' \underline{\vh}_j}{\left\| {\bf{\Xi}}_j' \underline{\vh}_j \right\|_2} - \frac{{\bf{\Xi}}_j' \widetilde{\vu}_{j1} e^{j\phi}}{\left\| {\bf{\Xi}}_j' \widetilde{\vu}_{j1} \right\|_2}  \right) \nonumber \\
& = 2 - \mathop {\lim }\limits_{M,C \to \infty } \left( \frac{\underline{\vh}_j^H {{\bf{\Xi}}_j'}^H {\bf{\Xi}}_j' \widetilde{\vu}_{j1} e^{j\phi}}{\left\| {\bf{\Xi}}_j' \underline{\vh}_j \right\|_2 {\left\| {\bf{\Xi}}_j' \widetilde{\vu}_{j1} \right\|_2}} \right. \nonumber \\
& \quad \left. + \frac{e^{-j\phi}\widetilde{\vu}_{j1}^H {{\bf{\Xi}}_j'}^H {{\bf{\Xi}}_j'} \underline{\vh}_j}{{\left\| {\bf{\Xi}}_j' \underline{\vh}_j \right\|_2 {\left\| {\bf{\Xi}}_j' \widetilde{\vu}_{j1} \right\|_2}}}\right) \nonumber
\end{align}
We treat the following quantity separately
\begin{align}
& \quad \mathop {\lim }\limits_{M,C \to \infty } \frac{\underline{\vh}_j^H {{\bf{\Xi}}_j'}^H {\bf{\Xi}}_j' \widetilde{\vu}_{j1} e^{j\phi}}{\left\| {\bf{\Xi}}_j' \underline{\vh}_j \right\|_2 {\left\| {\bf{\Xi}}_j' \widetilde{\vu}_{j1} \right\|_2}} \\
& = \mathop {\lim }\limits_{M,C \to \infty } \frac{\underline{\vh}_j^H {{\bf{\Xi}}_j'}^H {\bf{\Xi}}_j' \left( \frac{\underline{\vh}_j}{\left\|\underline{\vh}_j\right\|_2}  + \widetilde{\vu}_{j1} e^{j\phi} - \frac{\underline{\vh}_j}{\left\|\underline{\vh}_j\right\|_2}\right)}{\left\| {\bf{\Xi}}_j' \underline{\vh}_j \right\|_2 {\left\| {\bf{\Xi}}_j' \widetilde{\vu}_{j1} \right\|_2}} \nonumber \\
& = \mathop {\lim }\limits_{M,C \to \infty } \frac{\left\| {\bf{\Xi}}_j' \frac{\underline{\vh}_j}{\left\| \underline{\vh}_j \right\|_2} \right\|_2}{{\left\| {\bf{\Xi}}_j' \widetilde{\vu}_{j1} \right\|_2}} \nonumber %\\
\end{align}
\begin{align}
& = \mathop {\lim }\limits_{M,C \to \infty } \frac{\left\| {\bf{\Xi}}_j' \left( \frac{\underline{\vh}_j}{\left\| \underline{\vh}_j \right\|_2} - \widetilde{\vu}_{j1} e^{j\phi} + \widetilde{\vu}_{j1} e^{j\phi} \right)\right\|_2}{{\left\| {\bf{\Xi}}_j' \widetilde{\vu}_{j1} \right\|_2}} \nonumber \\
& \leq \mathop {\lim }\limits_{M,C \to \infty } \frac{\left\| {\bf{\Xi}}_j' ( \frac{\underline{\vh}_j}{\left\| \underline{\vh}_j \right\|_2} - \widetilde{\vu}_{j1} e^{j\phi} )\right\|_2}{{\left\| {\bf{\Xi}}_j' \widetilde{\vu}_{j1} \right\|_2}} + \mathop {\lim }\limits_{M,C \to \infty } \frac{\left\| {\bf{\Xi}}_j' \widetilde{\vu}_{j1} e^{j\phi} \right\|_2}{{\left\| {\bf{\Xi}}_j' \widetilde{\vu}_{j1} \right\|_2}} \nonumber \\
& = 1 \label{Eq:hdivu}
\end{align}
In a similar way, we can prove that
\begin{align}\label{Eq:uhinv}
\mathop {\lim }\limits_{M,C \to \infty } \frac{{\left\| {\bf{\Xi}}_j' \widetilde{\vu}_{j1} \right\|_2}}{\left\| {\bf{\Xi}}_j' \frac{\underline{\vh}_j}{\left\| \underline{\vh}_j \right\|_2} \right\|_2} \leq 1.
\end{align}
Combining (\ref{Eq:hdivu}) and (\ref{Eq:uhinv}), we obtain
\begin{equation}\label{Eq:hXiu}
\mathop {\lim }\limits_{M,C \to \infty } \frac{\underline{\vh}_j^H {{\bf{\Xi}}_j'}^H {\bf{\Xi}}_j' \widetilde{\vu}_{j1} e^{j\phi}}{\left\| {\bf{\Xi}}_j' \underline{\vh}_j \right\|_2 {\left\| {\bf{\Xi}}_j' \widetilde{\vu}_{j1} \right\|_2}} = 1.
\end{equation}
%By analogy,
With analogous derivation, we can prove
\begin{equation}\label{Eq:uXih}
\mathop {\lim }\limits_{M,C \to \infty } \frac{e^{-j\phi}\widetilde{\vu}_{j1}^H {{\bf{\Xi}}_j'}^H {{\bf{\Xi}}_j'} \underline{\vh}_j}{{\left\| {\bf{\Xi}}_j' \underline{\vh}_j \right\|_2 {\left\| {\bf{\Xi}}_j' \widetilde{\vu}_{j1} \right\|_2}}} = 1.
\end{equation}
Applying (\ref{Eq:hXiu}) and (\ref{Eq:uXih}) to (\ref{Eq:h_udl_converge2}) gives
\begin{equation}
\mathop {\lim }\limits_{M,C \to \infty } {\left\| \frac{{\bf{\Xi}}_j' \underline{\vh}_j}{\left\| {\bf{\Xi}}_j' \underline{\vh}_j \right\|_2} - \frac{{\bf{\Xi}}_j' \widetilde{\vu}_{j1} e^{j\phi}}{\left\| {\bf{\Xi}}_j' \widetilde{\vu}_{j1} \right\|_2}  \right\|_2^2 } = 0.
\end{equation}
The following equality holds
\begin{equation}
{\bf{\Xi}}_j' \underline{\vh}_j = {\bf{\Xi}}_j' {\bf{\Xi}}_j {\vh}_j = \mR_j^\dag \mR_j {\vh}_j = {\vh}_j,
\end{equation}
proving that
\begin{equation}
\mathop {\lim }\limits_{M,C \to \infty } {\left\| \frac{{\vh}_j}{\left\| {\vh}_j \right\|_2} - \bar{\bf{u}}_{j1} e^{j\phi} \right\|_2 } = 0,
\end{equation}
which completes the proof of Lemma \ref{lemmaEigenvectorOrig}.
\end{proof}

\subsection{Proof of Theorem \ref{theoStatis}:}\label{proof:theoStatis}
\begin{proof}
From (\ref{Eq:h_converge}) we readily obtain
\begin{equation}\label{Eq:hu_converge}
\mathop {\lim }\limits_{M,C \to \infty }  \frac{{\vh}_j^H \bar{\bf{u}}_{j1}}{\left\| {\vh}_j \right\|_2}  = 1.
\end{equation}
Recall from the uplink training (\ref{Eq:train}), we have
\begin{equation}\label{Eq:CAh1}
{\bf{\widehat h}}_{j}^{\text{CA}} = \frac{1}{\tau} \bar{\bf{u}}_{j1} \bar{\bf{u}}_{j1}^{H} \left( {\vh}_j \vs^T + \sum_{l\neq j}{\vh}_l \vs^T + {\bf{N}} \right)  \vs^*,
\end{equation}
and hence
\begin{align}
& \quad \mathop {\lim }\limits_{M, C \to \infty } \frac{\left\|{\bf{\widehat h}}_{j}^{\text{CA}} - \vh_{j} \right\|^2_2}{\left\|\vh_{j}\right\|^2_2}  %\\
\end{align}
\begin{align}
& = \mathop {\lim }\limits_{M, C \to \infty } \frac{ ({\bf{\widehat h}}_{j}^{\text{CA}} - \vh_{j} )^H ({\bf{\widehat h}}_{j}^{\text{CA}} - \vh_{j} )}{\left\|\vh_{j}\right\|^2_2} \nonumber \\
& = \mathop {\lim }\limits_{M, C \to \infty } \frac{1}{\left\|\vh_{j}\right\|^2_2} \left( \sum_{l\neq j}{\vh}_l\bar{\bf{u}}_{j1} \bar{\bf{u}}_{j1}^H \sum_{l\neq j}{\vh}_l + \sum_{l\neq j}{\vh}_l\bar{\bf{u}}_{j1} \bar{\bf{u}}_{j1}^H {\bf{N}} \frac{\vs^*}{\tau} \right. \nonumber \\
& \qquad + \left. \frac{\vs^T}{\tau}{\bf{N}}^H \bar{\bf{u}}_{j1} \bar{\bf{u}}_{j1}^H \sum_{l\neq j}{\vh}_l + \frac{\vs^T}{\tau}{\bf{N}}^H \bar{\bf{u}}_{j1} \bar{\bf{u}}_{j1}^H {\bf{N}} \frac{\vs^*}{\tau} \right. \nonumber \\
& \qquad - \vh_{j}^H \bar{\bf{u}}_{j1} \bar{\bf{u}}_{j1}^H \vh_{j} + \vh_{j}^H \vh_{j} \Big)\nonumber \\
& = \mathop {\lim }\limits_{M, C \to \infty } \frac{1}{\left\|\vh_{j}\right\|^2_2} \left( \vh_{j}^H \vh_{j} - \vh_{j}^H \bar{\bf{u}}_{j1} \bar{\bf{u}}_{j1}^H \vh_{j} \right).
\end{align}
Equation (\ref{Eq:hu_converge}) ensures that
\begin{align}
\mathop {\lim }\limits_{M, C \to \infty } \frac{1}{\left\|\vh_{j}\right\|^2_2} \vh_{j}^H \bar{\bf{u}}_{j1} \bar{\bf{u}}_{j1}^H \vh_{j} =  \frac{1}{\left\|\vh_{j}\right\|^2_2} \vh_{j}^H \vh_{j} = 1,
\end{align}
which concludes the proof.
\end{proof}

\subsection{Proof of Theorem \ref{theoAnglePower}:}\label{proof:theoAnglePower}
\begin{proof}
This proof follows similar steps towards Theorem \ref{theoStatis}. Thus we give a sketch of the proof only. Define
\begin{align}\label{Eq:Gamma2}
\Gamma &\triangleq \mathop {\lim }\limits_{C \to \infty } \left({\frac{1}{C} { {\bf{\widetilde W}}_j {\bf{\widetilde W}}_j^{H}} } \right) \\
& = {{\underline{\mathbf{h}}}_{j}^{(j)}} {{\underline{\mathbf{h}}}_{j}^{(j)H}} + \sum_{l \neq j} {{\underline{\mathbf{h}}}_{l}^{(j)}} {{\underline{\mathbf{h}}}_{l}^{(j)H}} + \sigma_n^2 {\bf{\Xi}}_j{\bf{\Xi}}_j^H,
\end{align}
where ${{\underline{\mathbf{h}}}_{l}^{(j)}} \triangleq {\bf{\Xi}}_j {{{\mathbf{h}}}_{l}^{(j)}}, l = 1, \ldots, L$.
Due to the asymptotic orthogonality between steering vectors in disjoint angular support, i.e., Lemma 3 in \cite{yin:13}, we can easily show that in large antenna limit, ${{\mathbf{h}}_{l\text{o}}^{(j)}}$ falls into the null space of ${{\bf{R}}_{j}^{(j)}}$. Thus
\begin{align}
\mathop {\lim }\limits_{M \to \infty } \frac{1}{M} {{\underline{\mathbf{h}}}_{l}^{(j)}} {{\underline{\mathbf{h}}}_{l}^{(j)H}} = \mathop {\lim }\limits_{M \to \infty } \frac{1}{M} {{\underline{\mathbf{h}}}_{li}^{(j)}} {{\underline{\mathbf{h}}}_{li}^{(j)H}}.
\end{align}
Then we have
\begin{align*}
\mathop {\lim }\limits_{M \to \infty } \frac{\Gamma}{M} = \frac{1}{M}\left( {{\underline{\mathbf{h}}}_{j}^{(j)}} {{\underline{\mathbf{h}}}_{j}^{(j)H}} + \sum_{l \neq j} {{\underline{\mathbf{h}}}_{li}^{(j)}} {{\underline{\mathbf{h}}}_{li}^{(j)H}} + \sigma_n^2 {\bf{\Xi}}_j{\bf{\Xi}}_j^H \right).
\end{align*}
Under Condition C1, it is easy to show that
\begin{align}
\mathop {\lim }\limits_{M \to \infty } \left\| \frac{\Gamma}{M} \frac{{{\underline{\mathbf{h}}}_{j}^{(j)}}}{\left\|{{\underline{\mathbf{h}}}_{j}^{(j)}}\right\|_2} - \frac{{{\underline{\mathbf{h}}}_{j}^{(j)^H}} {{\underline{\mathbf{h}}}_{j}^{(j)}}}{M} \frac{{{\underline{\mathbf{h}}}_{j}^{(j)}}}{\left\|{{\underline{\mathbf{h}}}_{j}^{(j)}}\right\|_2} \right\|_2 &= 0.
\end{align}
Given the following condition
\begin{equation}
\forall l \neq j, \left\| {\bf{\Xi}}_j {{\mathbf{h}}_{l\text{i}}^{(j)}} \right\|_2 < \left\|{\bf{\Xi}}_j {{\mathbf{h}}_{j}^{(j)}} \right\|_2,
\end{equation}
it is clear that the dominant eigenvector of ${\Gamma}/{M}$ converges to ${{{\underline{\mathbf{h}}}_{j}^{(j)}}}/{\left\|{{\underline{\mathbf{h}}}_{j}^{(j)}}\right\|_2}$ (up to a random phase), with its corresponding eigenvalue converging to ${{{\underline{\mathbf{h}}}_{j}^{(j)^H}} {{\underline{\mathbf{h}}}_{j}^{(j)}}}/{M} $. Then, using the same technique in the proof of Lemma \ref{lemmaEigenvectorOrig}, we obtain
\begin{equation}\label{Eq:hu_converge2}
\mathop {\lim }\limits_{M,C \to \infty }  \frac{{{\mathbf{h}}_{j}^{(j)^H}} \bar{\bf{u}}_{j1}}{\left\| {{\mathbf{h}}_{j}^{(j)}} \right\|_2}  = 1.
\end{equation}
Finally, we readily obtain (\ref{Eq:limit_partialOverlap}) by analogous derivations in Appendix \ref{proof:theoStatis}.
\end{proof}

%%%%%%%%%%%%%%%%%%%%%%%%%%%%%%%%%%%%% %\renewcommand{\baselinestretch}{1.3525}
%\bibliographystyle{IEEEtran}
\bibliography{bib/allCitations}
\bibliographystyle{IEEEtran}
\begin{IEEEbiography}[{\includegraphics[width=1in,height=1.25in,clip,keepaspectratio]{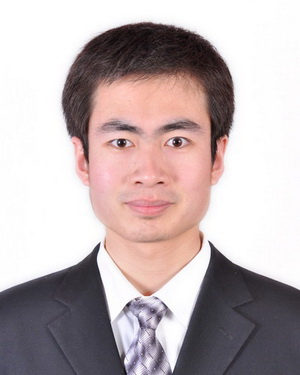}}]
{Haifan Yin}
received the B.Sc. degree in Electrical and Electronic Engineering and the M.Sc. degree in Electronics and Information Engineering from Huazhong University of Science and Technology, Wuhan, China, in 2009 and 2012 respectively. From 2009 to 2011, he had been with Wuhan National Laboratory for Optoelectronics, China, working on the implementation of TD-LTE systems as an R\&D engineer. In September 2012, he joined the Mobile Communications Department at EURECOM, France, as a Ph.D. student. His current research interests include signal processing, channel estimation, cooperative networks, and large-scale antenna systems.

H. Yin was a recipient of the 2015 Chinese Government Award for Outstanding Self-financed Students Abroad.
\end{IEEEbiography}
\vspace{-0.5cm}
%\vfill

\begin{IEEEbiography}[{\includegraphics[width=1in,height=1.25in,clip,keepaspectratio]{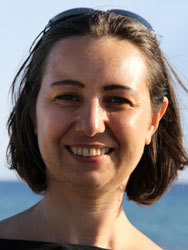}}]
{Laura Cottatellucci} is currently assistant professor at the Dept. of Mobile Communications in Eurecom. She obtained the PhD from Technical University of Vienna, Austria (2006). Specialized in networking at Guglielmo Reiss Romoli School (1996, Italy), she worked in Telecom Italia (1995-2000) as responsible of industrial projects. From April 2000 to September 2005 she worked as senior research in ftw Austria on CDMA and MIMO systems. From October to December 2005 she was research fellow on ad-hoc networks in INRIA (Sophia Antipolis, France) and guest researcher in Eurecom. In 2006 she was appointed research fellow at the University of South Australia, Australia to work on information theory for networks with uncertain topology. Cottatellucci is co-editor of a special issue on cooperative communications for EURASIP Journal on Wireless Communications and Networking and co-chair of RAWNET/WCN3 2009 in WiOpt.  Her research topics of interest are large system analysis of wireless and complex networks based on random matrix theory and game theory.
\end{IEEEbiography}

%\vfill
\vspace{-0.5cm}
\begin{IEEEbiography}[{\includegraphics[width=1in,height=1.25in,clip,keepaspectratio]{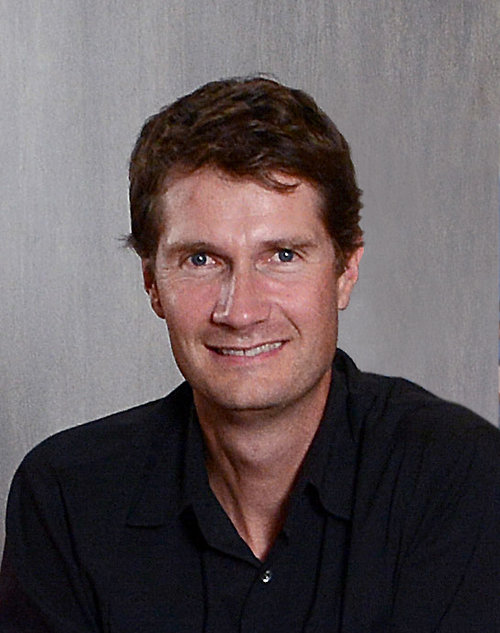}}]
{David Gesbert} (F'11) is Professor and Head of the Mobile Communications Department, EURECOM, France. He obtained the Ph.D degree from Ecole Nationale Superieure des Telecommunications, France, in 1997.
From 1997 to 1999 he has been with the Information Systems Laboratory, Stanford University. In 1999, he was a founding engineer of Iospan Wireless Inc, San Jose, Ca.,a startup company pioneering MIMO-OFDM (now Intel). Between 2001 and 2003 he has been with the Department of Informatics, University of Oslo as an adjunct professor. D. Gesbert has published about 250 papers and several patents all in the area of signal processing, communications, and wireless networks. He was named in the 2014 Thomson-Reuters List of Highly Cited Researchers in Computer Science. Since 2015, he holds an ERC advanced grant on the topic of ``Smart Device Communications"

  D. Gesbert was a co-editor of six special issues on wireless networks and communications theory.  He was an associate editor for IEEE Transactions on Wireless Communications and the EURASIP Journal on Wireless Communications and Networking. He authored or co-authored papers winning the 2004 and 2015 IEEE Best Tutorial Paper Award (Communications Society), 2012 SPS Signal Processing Magazine Best Paper Award. He co-authored the book ``Space time wireless communications: From parameter estimation to MIMO systems", Cambridge Press, 2006.  He is a Technical Program Chair for IEEE ICC 2017, to be held in Paris.

\end{IEEEbiography}

%\vfill

\begin{IEEEbiography}[{\includegraphics[width=1in,height=1.25in,clip,keepaspectratio]{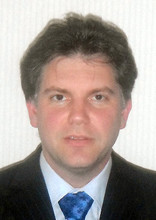}}]{Ralf R. M\"uller}
(S'96--M'03--SM'05) was born
in Schwa\-bach, Germany, 1970. He received the
Dipl.-Ing.\ and Dr.-Ing.\ degree with distinction from Friedrich-Alexander-Universit\"at (FAU)
Erlangen-N\"urnberg in 1996 and 1999, respectively.
From 2000 to 2004, he directed a research group
at The Telecommunications Research Center Vienna in Austria and taught as an adjunct professor
at TU Wien. In 2005, he
was appointed full professor at the Department of
Electronics and Telecommunications at the Norwegian University of Science
and Technology in Trondheim, Norway. In 2013, he joined the Institute for
Digital Communications at FAU Erlangen-N\"urnberg in Erlangen, Germany.
He held visiting appointments at Princeton University, US, Institute Eur\'ecom,
France, University of Melbourne, Australia, University of Oulu, Finland,
National University of Singapore, Babe{\c s}-Bolyai University, Cluj-Napoca,
Romania, Kyoto University, Japan, FAU Erlangen-N\"urnberg, Germany, and
TU M\"unchen, Germany.

Dr.\ M\"uller received the Leonard G.\ Abraham Prize (jointly with Sergio
Verd\'u) for the paper ``Design and analysis of low-complexity interference
mitigation on vector channels" from the IEEE Communications Society. He
was presented awards for his dissertation ``Power and bandwidth efficiency of
multiuser systems with random spreading" by the Vodafone Foundation for
Mobile Communications and the German Information Technology Society
(ITG). Moreover, he received the ITG award for the paper ``A random matrix
model for communication via antenna arrays" as well as the Philipp-Reis
Award (jointly with Robert Fischer). Dr. M\"uller served as an associate editor
for the IEEE TRANSACTIONS ON INFORMATION THEORY from 2003 to 2006. He is currently serving on the executive editorial board of the IEEE TRANSACTIONS ON WIRELESS COMMUNICATIONS.
\end{IEEEbiography}

\vspace{-10cm}
%\vfill

\begin{IEEEbiography}[{\includegraphics[width=1in,height=1.25in,clip,keepaspectratio]{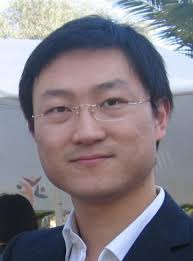}}]{Gaoning He} received his Master and Ph.D degrees in France, from \'Ecole d'Ing\'enieurs T\'el\'ecom ParisTech (ENST). He had been a research engineer in Motorola research center in Paris from 2006 to 2009, research fellow in Alcatel-Lucent Chair on Flexible Radio - Sup\'elec in Paris from 2009 to 2010, and research scientist at Alcatel-Lucent Bell-labs in Shanghai from 2010 to 2011. He joined Huawei Shanghai research center as senior researcher in 2011. Since 2014, he is assistant director and research manager of the Mathematical and Algorithmic Sciences Lab in Huawei France Research Center. His research interests lie in fundamental mathematics, algorithms, statistics, information \& communication sciences research.
\end{IEEEbiography}

%\vfill

\end{document}